\newtheorem{definition}{Definition}
\newenvironment{proof}{\noindent{\bf Proof:}\rm}{\hfill\hfill$\Box$\par\medbreak}
\newtheorem{theorem}{Theorem}[section]
\newtheorem{proposition}[theorem]{Proposition}
\newtheorem{lemma}[theorem]{Lemma}
\newtheorem{remark}[theorem]{Remark}
\newcommand{\e}{\mathbf{e}}
\newcommand{\PP}{\mathbf{P}}
\newcommand{\p}{\mathbf{p}}
\newcommand{\C}{C}
\newcommand{\q}{\mathbf{q}}
\newcommand{\x}{\mathbf{x}}
\newcommand{\Hull}{\operatorname{Hull}}
\newcommand{\Cone}{\operatorname{Cone}}
\newcommand{\argmin}{\operatorname{argmin}}
\newcommand{\kmax}{\operatorname{max}_k}
\newcommand{\subscr}[2]{{#1}_{\textup{#2}}}
\newcommand{\norm}[1]{\|#1\|}
\newcommand{\abs}[1]{|#1|}
\newcommand{\intHull}[1]{{\Hull_{#1}^o}}
\long\def\cut#1{{}}
\begin{document}

\title{\LARGE \bf
$k$-Capture in Multiagent Pursuit Evasion, \\
or the Lion and the Hyenas} 

\author{Shaunak D. Bopardikar\thanks{Shaunak D. Bopardikar is with
    United Technologies Research Center Inc., Berkeley, CA and Subhash
    Suri is with the Department of Computer Science, University of
    California Santa Barbara, CA. This work was performed when the
    first author was with the Department of Electrical and Computer
    Engineering, University of California Santa Barbara, CA,
    USA. Emails: \texttt{bshaunak@gmail.com, suri@cs.ucsb.edu}}
  \qquad Subhash Suri}


\maketitle

\begin{abstract}
  We consider the following generalization of the classical pursuit-evasion problem, 
  which we call \emph{$k$-capture}. A group of $n$ pursuers (hyenas) wish to capture 
  an evader (lion) who is free to move in an $m$-dimensional Euclidean space, the pursuers 
  and the evader can move with the same maximum speed, and at least $k$ pursuers must
  \emph{simultaneously} reach the evader's location to capture it.  If fewer than $k$ 
  pursuers reach the evader, then those pursuers get destroyed by the evader. Under what 
  conditions can the evader be $k$-captured? We study this problem in the discrete time,
  continuous space model and prove that $k$-capture is possible if and only there exists 
  a time when the evader lies in the interior of the pursuers' $k$-Hull.   When the pursuit 
  occurs inside a compact, convex subset of the Euclidean space, we show through an 
  easy constructive strategy that $k$-capture is always possible.
\end{abstract}

\section{Introduction}

We consider a variant of the pursuit-evasion game in which multiple
pursuers must simultaneously reach the evader's location to capture
it.  Specifically, an evader $\e$, who is free to move in an
$m$-dimensional Euclidean space, is being pursued by $n$ agents $\p_1,
\ldots, \p_n$.  The evader and the pursuers have identical motion
capabilities and, in particular, have equal maximum speed. Unlike the
classical pursuit evasion, our game requires at least $k$ pursuers to
\emph{simultaneously} reach the evader's location to capture it, for
some given value of $k \leq n$.  If fewer than $k$ pursuers attack
(reach) the evader, then those pursuers are destroyed by
the evader. We assume that no two players ever occupy the same
position in the environment \emph{except} at the moment of capture;
that is, co-location either ends the game or only one player survives
among the co-located ones. By disallowing co-location, we are assuming
a weaker model of pursuers, which may also be more realistic because
in many physical systems only one agent can occupy a point in the
space.
We call this version the \emph{$k$-capture pursuit evasion}, and
investigate necessary and sufficient conditions, as well as worst-case 
time bounds, for the $k$-capture.

\medskip

Pursuit-evasion games provide an elegant setting to study algorithmic and 
strategic questions of exploration or monitoring by autonomous agents. 
Their rich mathematical history can be traced back to at least 1930s 
when Rado posed the now-classical Lion-and-Man'' problem~\cite{JEL:86}: 
\emph{a lion and a man in a closed arena have equal maximum speeds; 
what tactics should the lion employ to be sure of his meal?}
The problem was settled by Besicovitch who showed that the man
can escape regardless of the lion's strategy~\cite{JEL:86}. An 
important aspect of this pursuit-evasion problem, and its solution, is 
the assumption of \emph{continuous time}: each player's motion is a 
continuous function of time, which allows the lion to get arbitrarily 
close to the man but never capture him. If, however, the players move in 
discrete time steps, taking alternating turns but still in 
continuous space, the outcome is different, as first conjectured by 
Gale~\cite{RKG:91} and proved by Sgall~\cite{JS:01}. 

\medskip

The distinction between continuous and discrete time models is significant 
albeit subtle. Formulations based on the continuous time lead to differential 
games, whose solution requires solving the Hamilton-Jacobi-Bellman-Isaacs 
(HJBI) equation. This is a partial differential equation, whose solution 
becomes intractable in complex scenarios. (See the seminal work of
Isaacs~\cite{RI:65} on several continuous time classical games
including the Homicidal Chauffeur game and the Game of Two Cars.)
Besides this \emph{theoretical} difficulty, one also faces the \emph{practical}
problem that continuous time solutions usually are expressed as a feedback 
law requiring an \emph{instantaneous} measurement of each player's 
position and its communication to the opponent. This is impractical from an 
implementation point of view, and especially problematic for non-smooth 
motions.

\medskip 

Consequently, discrete time alternate moves versions of pursuit evasion 
have been favored in recent past, especially due to their algorithmic
tractability. In these formulations, the evader and the pursuers move
in alternating time instants, with the evader moving first.  We note that 
a capture in this formulation is equivalent to the evader being inside a 
specified small neighborhood of the pursuer in the continuous time 
formulation. In the discrete time model, Sgall~\cite{JS:01} is able 
to circumvent the problem of lion approaching but not reaching the 
man in the continuous formulation, and shows that the lion can always 
capture the man in finite time inside a semi-open bounded environment.

\medskip

When the evader is free to move inside an \emph{unbounded}
environment, multiple pursuers are clearly required to keep the evader
from escaping.  The capture condition is the same as before: if at
some time $t$, \emph{any} pursuer can reach the position of the evader
then the latter is captured.  In this setting, it is known that the
evader can be captured if and only if it lies in the convex hull of
the pursuers~\cite{SK-CVR:05}.  Many other pursuit evasion problems
have also been studied, with focus on different types of
environments~\cite{LA-ASG-EMR:92,VI-SK-SK:05}, characterization of
environments in which a certain capture strategy
works~\cite{SBA-RB-RG:09}, visibility-based
pursuit-evasion~\cite{LJG-JCL-SML-DL-RM:98}, sensing
limitations~\cite{SDB-FB-JPH:07o,NK-VI:08} etc.  Finally, if both time
\emph{and} space are assumed to be discrete, then the underlying space
is represented as a graph with nodes and edges, and on each move a
player can move from one node to another by traversing the edge(s)
connecting them. The techniques in this formulation tend to be
different, and we refer the reader to a representative set of
papers~\cite{TDP:78, MA-MF:84, VI-NK:08, KK-SS:10}.

\medskip 

Our objective in this paper is to study the $k$-capture problem in the 
unbounded continuous space and discrete time framework.  In particular, we 
assume that a group of $n$ pursuers (hyenas) wish to capture an evader (lion) who is free 
to move in the $m$-dimensional Euclidean space.  The players take turns: the evader 
moves first, the pursuers move next and all of the pursuers can move 
simultaneously.  On its turn, each player can move anywhere inside a 
unit disk centered at its current position. (In other words, the
maximum speed of the players is normalized to one.)
We assume that no two players may occupy the same position in the
environment \emph{except} 
at the moment of capture. Technically, this assumption is used only to rule 
out the possibility of a trivial pursuer strategy where they partition themselves 
into size $k$ subgroups, with each subgroup moving as a ``meta pursuer.''
Co-location may also be unrealistic in many physical systems, and by disallowing 
it we only strengthen our results because pursuers without co-location are 
weaker in power than those with co-location.

\medskip

We say that the evader is $k$-captured, for some specified value of $k$, 
if after a finite time, at least $k$ pursuers reach the evader's location.
However, if fewer than $k$ pursuers reach the evader's location, then
the evader is able to capture (or destroy) those pursuers. In other
words, if at the end of a pursuer move, the evader occupies the same position as some of
the pursuers, then the game either ends ($k$-capture occurs), or all
the $j$, where $j < k$, pursuers at that location are captured leaving
only the evader.  We study the necessary and sufficient conditions
under which such a $k$-capture is possible, and derive bounds on the
worst-case time needed to achieve this. Additionally, we address a
version of this problem played in a compact and convex subset of a Euclidean space.

\medskip

In particular, our paper makes four main contributions. First, we show
that a necessary condition for $k$-capture is that the evader
must be located inside the $k$-Hull of the pursuers at the beginning
of every evader move. The $k$-Hull is the set of all points $p$ such
that any line through $p$ divides the given points into two sets of at
least $k$ points each. Second, we show that this simple $k$-Hull
condition is also sufficient. In other words, if there is ever a time
when this condition is satisfied, and in particular if it holds at
time $t=0$, then the pursuers can $k$-capture the evader in finite
time. Our proof of sufficiency is constructive, and based on a new
multi-pursuer strategy.  Third, we derive an upper bound for the time
needed to capture the evader, as a function of the initial positions
of the pursuers and of the evader. Finally, for a version of this
problem played in a compact and convex environment in a Euclidean
space, we design a novel strategy and show that the evader is
$k$-captured using $k$ pursuers.

\medskip

This paper is organized as follows. The problem formulation and the
necessity of the $k$-Hull condition for capture are presented in
Section~\ref{sec:problem}. Our multi-pursuer capture strategy and the
sufficiency of the $k$-Hull condition is presented in
Section~\ref{sec:suff}. A version of this problem played in a compact
and convex environment is analyzed in Section~\ref{sec:compact}. The
conclusions and future directions for this work are summarized in
Section~\ref{sec:conclusions}.

\section{Problem Formulation and Necessary Condition for $k$-Capture}
\label{sec:problem}

\medskip

Our pursuit-evasion game is played in an $m$-dimensional Euclidean space, with
$n$ pursuers $\p_1, \p_2, \dots,\p_n$ and a single evader $\e$. The
positions of these agents at any time $t$ are denoted as $\p_j (t)$,
for $j=1, 2, \ldots, n$, and $\e(t)$, where $t \in \mathbb{Z}_{\geq 0}$.  
In Section~\ref{sec:compact}, we also consider the capture problem in a compact
convex environment. 

\medskip

We assume that the game is played in discrete time using alternate moves:
on a turn, the evader moves first, all the pursuers simultaneously move next.
We assume a normalized maximum speed of one, meaning that each player can move to 
any position inside a closed ball of radius one centered at the player's current position. 
More precisely, the players' motions are described by the following equations:
\begin{align*}
\e(t+1) &= \e(t) + \mathbf{u}_e(t,\p_1(t),\dots,\p_n(t)), \\
\p_j(t+1) &= \p_j(t) + \mathbf{u}_{\p_j}(t,\e(t),\e(t+1)),
\end{align*}
where $\mathbf{u}_e$ and $\mathbf{u}_{\p_j}$ are unit vectors, termed as 
\emph{strategies} of the evader $\e$ and the pursuer $\p_j$, respectively. 
These motion equations say that each agent's strategy depends on the current positions
of all other players, and that each agent can move to any position
within distance one of its current position. (The apparent asymmetry 
in the equations of the evader and the pursuers is due to the fact that the evader moves
first, so the pursuers' moves can depend on the evader's positions at times $t$ \emph{and} 
$t+1$.)

\medskip

The capture occurs when evader is at the same location as some of the pursuers.
The $k$-capture of the evader requires at least $k$ pursuers, while fewer than 
$k$ pursuers are themselves captured by the evader.\footnote{%
We remark, however, that in 
the discrete time alternate moves model, the evader cannot force a pursuer's capture
because the pursuers move \emph{after} the evader. Indeed, if the evader
moves to the current location of a pursuer $\p$, then $\p$ can always move away 
from the evader at its turn. However, one cannot rule out a pursuers' strategy
that involves \emph{sacrificing} some of them to ultimately achieve $k$-capture.
}
Formally, we say that the evader is \emph{$k$-captured} if there exists a finite 
time $T$ and a subset $\C \subset \{\p_1, \ldots, \p_n \}$ of $k$ pursuers such that 
$\norm{\p_j (T)-\e(T)}\;=\;0$ but $\norm{\p_j (t)-\e(t)}>0$, for all $t<T$
and all $\p_j \in \C$.
In other words, the $k$-capture occurs at a time $T$ if at least $k$ pursuers 
simultaneously reach the evader's location at time $T$, and none of these pursuers
have ever been captured in the past.\footnote{%
While it is sufficient to ensure the safety of only the $k$
pursuers who perform the $k$-capture, in our strategy, \emph{all}
the pursuers will remain safe.} 
We say that the evader \emph{escapes} if there exists no finite time at 
which the pursuers $k$-capture the evader.
Finally, we require that no two players occupy the 
same point in the environment \emph{except} at the time of capture.

\medskip

We now formulate a necessary condition for $k$-capture, which is then
complemented by Section~\ref{sec:suff} that shows that this condition
is also sufficient.  Our necessary condition prescribes the location
of the evader relative to the locations of the pursuers for the
$k$-capture to occur. This condition is independent of the pursuers'
strategy: that is, if the condition is violated, then there always
exists an evader strategy for escape regardless of the pursuers'
strategy.

\medskip

Naturally, the convex hull of the pursuers' locations plays a key role in the game.
This is not surprising because the convex hull is precisely the set of all evader 
locations that are capturable in the classical single pursuer game, as is
well-known~\cite{SK-CVR:05}. 

\begin{lemma}
\label{lem:conv}
If the evader's initial location is not inside the interior of the convex hull
of the pursuers, then it cannot be $k$-captured, even for $k=1$.
\end{lemma}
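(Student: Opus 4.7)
The plan is to exhibit an explicit escape strategy for the evader via a supporting hyperplane argument. Because $\e(0)\notin\Int(\Hull(\p_1(0),\dots,\p_n(0)))$, there is a unit vector $\mathbf{v}\in\real^m$ and a hyperplane through $\e(0)$ with normal $\mathbf{v}$ such that $\langle \p_j(0)-\e(0),\mathbf{v}\rangle\le 0$ for every pursuer $j$. I would let the evader play the stationary strategy $\mathbf{u}_e\equiv\mathbf{v}$, so that $\e(t+1)=\e(t)+\mathbf{v}$ at every round.

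I would then prove by induction the invariant $\langle \p_j(t)-\e(t),\mathbf{v}\rangle\le 0$ for every $j$ and every $t\ge 0$. The inductive step reduces to a one-line calculation: the change of this inner product from $t$ to $t+1$ equals $\langle \mathbf{u}_{\p_j},\mathbf{v}\rangle-\langle \mathbf{v},\mathbf{v}\rangle$, which is at most $\|\mathbf{u}_{\p_j}\|\cdot\|\mathbf{v}\|-1\le 0$ by Cauchy--Schwarz and the unit-speed constraint, so the invariant is preserved under every pursuer strategy. Under this invariant and the standing no-coincidence assumption $\p_j(t)\ne\e(t)$, I would then check that no pursuer can occupy $\e(t+1)$. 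Decomposing $\p_j(t)-\e(t)=a\mathbf{v}+\mathbf{w}$ with $a\le 0$ and $\mathbf{w}\perp\mathbf{v}$, reaching $\e(t+1)=\e(t)+\mathbf{v}$ would require the pursuer to travel the displacement $(1-a)\mathbf{v}-\mathbf{w}$, of squared length $(1-a)^2+\|\mathbf{w}\|^2\ge 1$, with equality only when $a=0$ and $\mathbf{w}=0$, i.e.\ $\p_j(t)=\e(t)$---contradicting the hypothesis. Hence no pursuer ever reaches $\e$, and $k$-capture fails even for $k=1$.

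The main subtlety is the boundary case where some pursuer starts on the supporting hyperplane itself ($a=0$): strict separation in the $\mathbf{v}$-direction disappears, and what rescues the argument is the standing non-coincidence of players, which forces $\mathbf{w}\ne 0$ and keeps the required pursuer displacement strictly longer than one. This is precisely why the lemma's hypothesis reads ``interior of the convex hull'' rather than simply ``in the convex hull'': once $\e(0)$ lies in the strict interior of $\Hull$, no single direction $\mathbf{v}$ separates it from every pursuer, and this escape strategy necessarily breaks down.
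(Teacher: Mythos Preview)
Your proof is correct and takes essentially the same approach as the paper: a supporting hyperplane through $\e(0)$ puts every pursuer in one closed half-space, and the evader escapes by moving along the outward normal at maximum speed. The paper's argument is a one-line sketch; you supply the inductive invariant $\langle \p_j(t)-\e(t),\mathbf{v}\rangle\le 0$ and, in particular, carefully treat the boundary case where a pursuer sits exactly on the hyperplane, which the paper leaves implicit.
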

\begin{proof}
  If the evader is not in the interior of the convex hull, then there
  exists a hyperplane through the evader's location such that all the
  pursuers lie in one (closed) half-space defined by the
  hyperplane. The evader simply escapes by moving perpendicular to
  this hyperplane, away from the pursuers, at maximum speed.
\end{proof}

\medskip

\subsection{The {\Large $k$}-Hull}

\medskip

When $k > 1$, we need a generalized notion of the convex hull. The standard
convex hull can be defined as the set of points so that any hyperplane tangent to
the hull contains at least one point of the hull in each of the two half-spaces.
If we require that at least $k$ points lie in each half-space, then we
get a structure called $k$-Hull, introduced by Cole, Sharir and 
Yap~\cite{RC-MS-CKY:87}, which also has intimate connections to other
fundamental structures in computational geometry such as 
$k$-levels and $k$-sets~\cite{HE:87}.

\begin{definition}[$k$-Hull]\label{def:khull}
  Let $S$ be a set of $n$ points in the plane, and let $k$ be an
  integer.  The $k$-Hull of $S$ denoted by $\Hull_k(S)$ is the set of
  points $p$ such that, for any hyperplane $\ell(p)$ through $p$, there are
  at least $k$ points of $S$ in each closed half-space of $\ell(p)$.
\end{definition}

Clearly, the standard convex hull is the same as the $1$-Hull, and it
is also easy to see that the $(k+1)$-Hull is contained in the
$k$-Hull. One can also show, using Helly's Theorem~\cite{EH:23}, that the
$k$-Hull is always non-empty for $k \leq \lceil n/(m+1)\rceil$, where
$m$ is the dimension of the underlying Euclidean space. We,
therefore, assume throughout this paper that $1 \leq k \leq \lceil n/(m+1)
\rceil$.  In particular, the standard convex hull in two dimensions, is well-defined for
3 or more non-collinear points, but $2$-Hull requires at least $n=5$
points in the plane. Fig.~\ref{fig:constraint} shows the two possible
configurations for $n=5, k=2$ for a planar environment.

\begin{figure}[htbp]
\centering
\includegraphics[width=0.8\columnwidth]{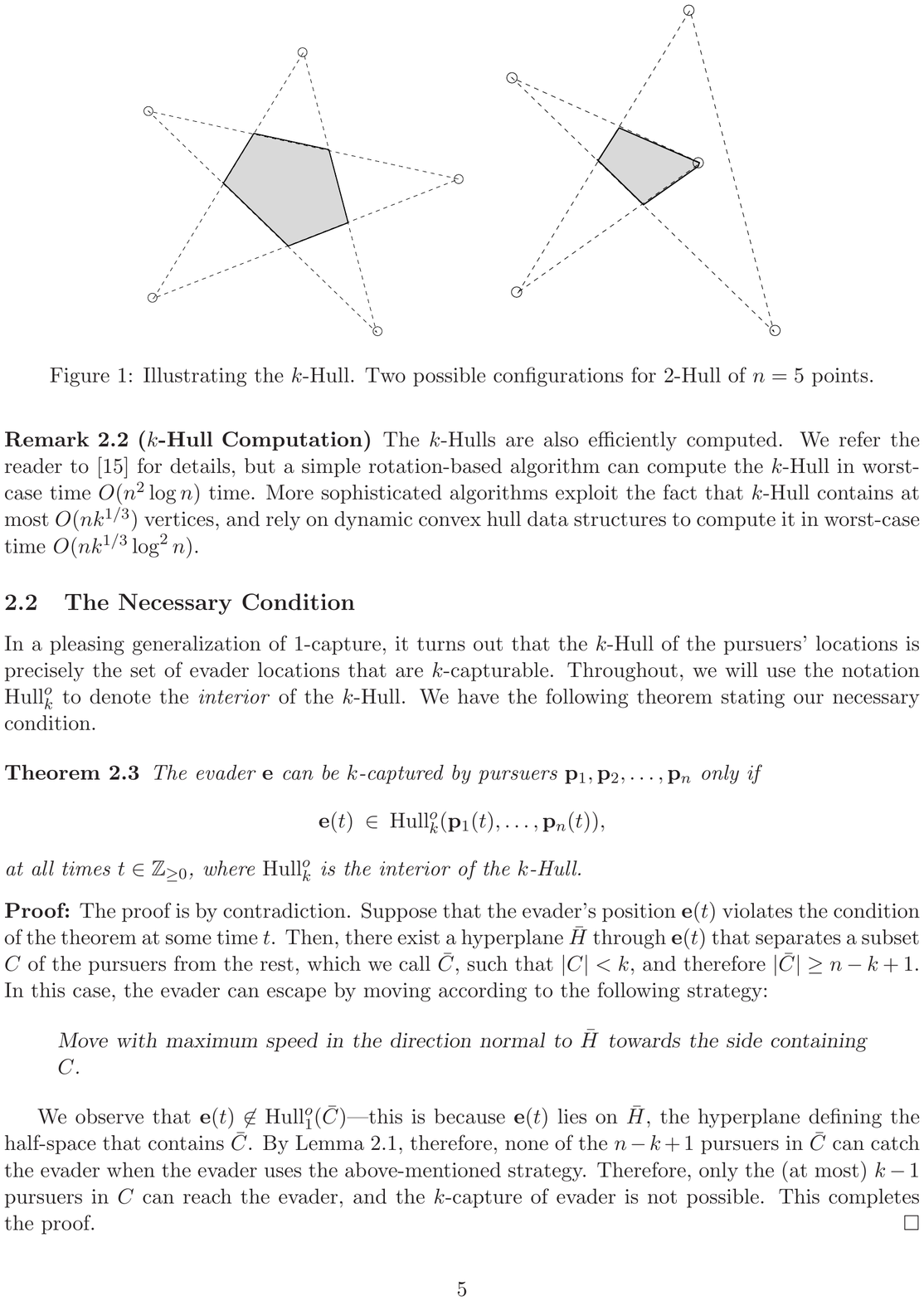}
\caption{Illustrating the $k$-Hull. Configurations for $2$-Hull of
  $n=5$ points in the plane.}
\label{fig:constraint}
\end{figure}

\begin{remark}[$k$-Hull Computation] 
  While computational complexity is the not focus of our paper, we do
  point out that $k$-Hulls are also efficiently computable.  Under the
  point-hyperplane duality, they correspond to the level $k$ in an
  arrangement of hyperplanes, and therefore computed easily in
  $O(n^m)$ time in $m$ dimensions. The bound can be improved somewhat
  using more sophisticated algorithms and analysis. For instance, in
  the two-dimensional plane, $k$-Hull contains at most $O(n k^{1/3})$
  vertices, and using dynamic convex hull data structures, it can be
  computed in worst-case time $O(n k^{1/3} \log^2
  n)$~\cite{RC-MS-CKY:87}.
\end{remark}

\medskip

\subsection{The Necessary Condition}

\medskip

In a pleasing generalization of 1-capture, it turns out that the $k$-Hull of 
the pursuers' locations is precisely the set of evader locations that are 
$k$-capturable.  Throughout, we will use the notation
$\intHull{k}$ to denote the \emph{interior} of the $k$-Hull.
We have the following theorem stating our necessary condition.

\begin{theorem} \label{thm:necessary} The evader $\e$ can be
  $k$-captured by pursuers $\p_1, \p_2, \ldots, \p_n$ only if
                $$\e(t) \:\in\: \intHull{k} (\p_1(t),\dots,\p_n(t)),$$
	at all times $t\in \mathbb{Z}_{\geq 0}$, where $\intHull{k}$ is the interior of 
	the $k$-Hull.
\end{theorem}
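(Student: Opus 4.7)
The plan is to prove the contrapositive: assume that $\e(t_0) \notin \intHull{k}(\p_1(t_0),\ldots,\p_n(t_0))$ at some time $t_0$, and exhibit an evader strategy that avoids $k$-capture from that moment on. The first step is to translate the topological assumption into the combinatorial fact that there exists a unit vector $v$ such that the open half-space $H^+ := \{x : (x - \e(t_0))\cdot v > 0\}$ contains at most $k-1$ pursuers. This follows from the equivalent characterization of the interior of the $k$-Hull: $p \in \intHull{k}$ if and only if every open half-space bounded by a hyperplane through $p$ contains at least $k$ points. This is Definition~\ref{def:khull} with strict rather than weak inequalities, and can be verified by sorting the $v$-coordinates of the pursuers along each direction and invoking compactness of the unit sphere.

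\medskip

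Given such a $v$, the evader will commit to the stationary strategy $\cont_e \equiv v$, translating by $v$ at every subsequent turn. To rule out $k$-capture at time $t_0+1$, place coordinates so that $\e(t_0)$ sits at the origin and $v$ is the first axis, so that $\e(t_0+1) = v$. Any pursuer $\p_j$ not in $H^+$ satisfies $\p_j(t_0)\cdot v \leq 0$, so after a unit-length move $\p_j(t_0+1)\cdot v \leq 1$. Equality $\p_j(t_0+1) = \e(t_0+1)$ would force the pursuer's move to be exactly $v$ and $\p_j(t_0)$ to lie on the separating hyperplane with perpendicular projection equal to $\e(t_0+1)$, i.e.\ $\p_j(t_0) = \e(t_0)$; the no-colocation hypothesis excludes precisely this case. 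Therefore only the at most $k-1$ pursuers originally in $H^+$ can be at $\e(t_0+1)$, preventing $k$-capture at time $t_0+1$.

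\medskip

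Persistence is immediate from the same inequality. Every pursuer not originally in $H^+$ still satisfies $\p_j(t_0+1)\cdot v \leq 1 = \e(t_0+1)\cdot v$, so the open half-space $\{x : (x-\e(t_0+1))\cdot v > 0\}$ contains only (a subset of) those at most $k-1$ pursuers that started in $H^+$. Hence $\e(t_0+1) \notin \intHull{k}$ at time $t_0+1$, with the same direction $v$ as witness, and the evader may apply the same argument at each subsequent step. A straightforward induction on $s \geq 0$ then shows that the pure translation strategy $\cont_e \equiv v$ prevents $k$-capture at every time $t_0+s$, contradicting the assumption that $k$-capture takes place.

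\medskip

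The hardest part will be Step~1, namely the clean passage from the \emph{topological} fact that $\e(t_0)$ is not in the interior of $\Hull_k$ to the \emph{combinatorial} fact that some open half-space through $\e(t_0)$ is missing at least $n-k+1$ pursuers. Once that separating direction $v$ is in hand, the remainder of the argument is a one-dimensional bookkeeping along $v$, and the no-colocation assumption is precisely what is needed to rule out the sole degenerate case in which a pursuer sitting on the hyperplane $H$ could spend its entire unit budget to arrive at $\e(t_0+1)$.
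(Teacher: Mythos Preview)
Your proposal is correct and follows essentially the same approach as the paper: exhibit a separating hyperplane through $\e(t_0)$ with fewer than $k$ pursuers on one side, and have the evader move at unit speed along the normal toward that side. The paper packages the ``large side can never catch up'' claim as an appeal to Lemma~\ref{lem:conv}, whereas you unpack it as a direct one-dimensional inequality along $v$ together with an explicit induction; you are also more careful than the paper in passing from ``not in the interior of $\Hull_k$'' to the existence of an \emph{open} half-space with at most $k-1$ pursuers, and in invoking the no-colocation hypothesis to dispose of the degenerate case $\p_j(t_0)=\e(t_0)$.
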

\begin{proof}
	The proof is by contradiction. Suppose that the evader's position $\e(t)$
  	violates the condition of the theorem at some time $t$. Then, there exist 
	a hyperplane $\bar H$ through $\e(t)$ that separates a subset $\C$ of the
	pursuers from the rest, which we call $\bar{\C}$, such that $|\C| < k$, 
	and therefore $|\bar{\C}| \geq n-k+1$.
	In this case, the evader can escape by moving according to the following strategy:

        \begin{quote}
        {\sl Move with maximum speed in the direction normal to $\bar H$
        towards the side containing $\C$.}
        \end{quote}

	We observe that $\e(t) \not\in \intHull{1} (\bar{\C})$---this is
	because $\e(t)$ lies on $\bar H$, the hyperplane defining the half-space that
	contains $\bar {\C}$. By Lemma~\ref{lem:conv}, therefore, none of the
	$n-k+1$ pursuers in $\bar{\C}$ can catch the evader when the evader
	uses the above-mentioned strategy. Therefore, only the (at most) $k-1$ 
	pursuers in $\C$ can reach the evader, and the $k$-capture of evader 
	is not possible.  This completes the proof.
\end{proof}

The necessary condition asserts that if there is ever a time when the
evader is outside the $k$-Hull of the pursuers, then it has an escape
strategy. The main result of our paper, presented in the following
section, shows that this necessary condition is also sufficient. In
particular, if the evader lies in the pursuers' $k$-Hull at the
initial time instant $t=0$, then the pursuers are able to $k$-capture
it.  (Clearly, if evader is not inside the $k$-Hull initially, then it
can escape unless it plays sub-optimally and move inside the pursuers'
$k$-Hull at a later time, allowing them to capture it.)

\section{Proof of Sufficiency}\label{sec:suff}

In this section, we prove our main result, which is to show that the
necessary condition of Theorem~\ref{thm:necessary} is also sufficient.
The proof, which is constructive, outlines a strategy
for the pursuers and derives an upper bound on the time needed for
the capture. Our analysis exploits properties of the pursuers' $k$-Hull, 
and so we begin with some geometric preliminaries.

\medskip

\subsection{Geometric Preliminaries and an Orientation-Preserving
  Strategy}

\medskip

In general, the orientations of the pursuers with respect to the
evader will change once the pursuit begins. We will show, however,
that pursuers can coordinate their moves to preserve their individual
directions relative to the evader's location. Such a strategy will
allow us to conclude that if the evader is in the $k$-Hull of the
pursuers at the initial instant, then it will remain in the $k$-Hull
at all subsequent instants.

\medskip

Let us call a pursuers' strategy \emph{orientation-preserving} if the
orientations of the vectors $\p_i-\e$ are preserved throughout the
pursuit. We will prove that there is an orientation-preserving
$k$-capture strategy for the pursuers.  But first, we establish a key
geometric lemma about such an strategy.

\medskip

Let $\mathbf{u}_e(t)$ denote the evader's move at time $t$, where the vector 
$\mathbf{u}_e$ is a point on the $m$-dimensional sphere $\mathbb{S}$. 
Let $\theta_i(\mathbf{u}_e)$ denote the (smaller) angle between vectors 
$\p_i(t)-\e(t)$ and $\mathbf{u}_e(t)$. Define,
\[
g(\mathbf{u}_e):= \kmax \{\cos\theta_1(\mathbf{u}_e),\ldots,\cos\theta_n(\mathbf{u}_e)\},
\]
where $\kmax$ refers to the $k$-th maximum of the $n$ quantities. 

The following result states that as long as the pursuers follow an
orientation-preserving strategy, one can always find $k$ favorable
pursuers at each instant of time, for whom the $k$ respective $\theta$'s are all
less than a number which remains invariant at all times and which is
strictly less than $\pi/2$.

\begin{lemma} \label{lem:initial_k} Suppose that the evader lies
  inside the $k$-Hull of the pursuers' initial locations, and the
  pursuers follow an orientation-preserving strategy throughout the
  pursuit. Then, the following facts hold at all times:
\begin{itemize}
	\item There exists a $\subscr{\beta}{max} \:<\: \pi/2$, such
          that at every instant of time,
\begin{equation}\label{eq:beta_k}
\subscr{\beta}{max}:= \arccos\left(\min_{\mathbf{u}_e \in \mathbb{S}}g(\mathbf{u}_e)\right).
\end{equation}
	\item After any move by the evader at time $t+1$, there exist at least $k$
	         pursuers $\p_{i_1},\dots,\p_{i_k}$ such that
  		$\theta_{i_j} \;\leq\; \subscr{\beta}{max}$,
		for all $j \in \{i_1,\ldots,i_k\}$.
\end{itemize}
\end{lemma}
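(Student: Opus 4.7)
The plan is to transport time-$0$ information forward using orientation preservation, then reduce each bullet to a geometric fact about the fixed initial configuration.

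First I would observe that $\theta_i(\mathbf{u}_e)$ depends on $\p_i(t)-\e(t)$ only through its direction. Under an orientation-preserving strategy, $\p_i(t)-\e(t) = r_i(t)\,\hat v_i$ with $\hat v_i$ the fixed initial unit direction and $r_i(t)>0$ by the no-colocation assumption, so $\theta_i(\mathbf{u}_e)$, and hence $g(\mathbf{u}_e)$, is independent of $t$. This already shows that $\subscr{\beta}{max}$ is well-defined and time-invariant, which reduces the entire analysis to the initial configuration.

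For the bound $\subscr{\beta}{max} < \pi/2$, I would argue $g(\mathbf{u}_e) > 0$ pointwise on $\mathbb{S}$ and then invoke continuity and compactness. Pointwise positivity uses the \emph{interior} hypothesis: given $\mathbf{u}_e \in \mathbb{S}$, pick $\delta > 0$ with $\e(0) + \delta\mathbf{u}_e \in \Hull_k(\p_1(0),\dots,\p_n(0))$. Applying the $k$-Hull condition at the shifted point to the hyperplane through it orthogonal to $\mathbf{u}_e$, the closed half-space on the $\mathbf{u}_e$-side contains at least $k$ pursuers, each of which satisfies $\mathbf{u}_e \cdot (\p_i(0) - \e(0)) \geq \delta > 0$, i.e.\ $\cos\theta_i(\mathbf{u}_e) > 0$. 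Hence at least $k$ of the cosines are strictly positive, so the $k$-th largest, namely $g(\mathbf{u}_e)$, is positive. Rewriting $g = \max_{|S|=k}\min_{i\in S}\cos\theta_i$ exhibits it as a finite max-min of continuous functions, so $g$ is continuous on the compact sphere $\mathbb{S}$, its minimum is attained and strictly positive, and therefore $\subscr{\beta}{max}<\pi/2$.

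The second bullet then follows immediately: for the actual evader move $\mathbf{u}_e(t)$, the definition of $\subscr{\beta}{max}$ gives $g(\mathbf{u}_e(t)) \geq \cos\subscr{\beta}{max}$, which by the definition of $\kmax$ means at least $k$ of the quantities $\cos\theta_i(\mathbf{u}_e(t))$ meet this bound, so $\theta_i \leq \subscr{\beta}{max}$ for those $k$ pursuers. The main obstacle I anticipate is the pointwise-positivity step, since the $k$-Hull is defined via \emph{closed} half-spaces: without the interior hypothesis one can only conclude $\cos\theta_i \geq 0$, and the perturbation $\e(0) \mapsto \e(0) + \delta\mathbf{u}_e$ is exactly what upgrades this to a strict inequality. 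Compactness of $\mathbb{S}$ then converts pointwise strict positivity into the uniform angular gap below $\pi/2$ that the rest of the paper presumably exploits.
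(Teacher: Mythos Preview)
Your proposal is correct and follows essentially the same route as the paper: reduce to $t=0$ via orientation preservation, establish pointwise positivity of $g$ on $\mathbb{S}$ from the $k$-Hull interior hypothesis, and then invoke continuity plus compactness to get a uniform positive lower bound (hence $\subscr{\beta}{max}<\pi/2$), after which the second bullet is immediate. The only cosmetic difference is that you make the interior hypothesis work via an explicit perturbation $\e(0)\mapsto\e(0)+\delta\mathbf{u}_e$, whereas the paper phrases the same step as a contrapositive (``if fewer than $k$ pursuers had $\theta_j<\pi/2$, the orthogonal hyperplane through $\e$ would witness $\e\notin\intHull{k}$'').
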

\begin{proof}
Since $\e(0) \in \intHull{k} (\p_1(0),\dots,\p_n(0))$, an orientation
preserving strategy will ensure that $\e(t) \in \intHull{k}
(\p_1(t), \dots,\p_n(t))$, for all time instants $t$. Thus, for any
$t$, the quantity $g(\mathbf{u}_e)$ is identically defined as for $t=0$, and
therefore for the first claim, it suffices to show the existence of a
$\subscr{\beta}{max} < \pi/2$ at time $t=0$, which satisfies Eq.~\eqref{eq:beta_k}.

To see this, we can write $g(\mathbf{u}_e)$ at time $t=0$ as
 \[
g(\mathbf{u}_e) = \kmax \left \{\frac{(\p_1(0)-\e(0)) \cdot
\mathbf{u}_e}{\norm{\p_1(0)-\e(0)}},\ldots,\frac{(\p_n(0)-\e(0)) \cdot
\mathbf{u}_e}{\norm{\p_n(0)-\e(0)}}\right\},
\]
and therefore $g(\cdot)$ is a continuous function of $\mathbf{u}_e$. Since
$\mathbf{u}_e \in \mathbb{S}$, which is a compact set, $g(\cdot)$ attains a
minimum for some $\mathbf{u}_e^*$ in $\mathbb{S}$. It now remains to show
that the minimum value $g(\mathbf{u}_e^*)>0$. Now, for every choice of $\mathbf{u}_e$, we must
have at least $k$ pursuers $\p_{i_1},\dots,\p_{i_k}$ such that
$\theta_{j} \;<\; \pi/2$, for all $j \in \{i_1,\ldots,i_k\}$. If
this were not the case for some $\mathbf{\bar u}_e$, then the hyperplane
perpendicular to $\mathbf{\bar u}_e$ through $\e$ would separate
$k-1$ pursuers from the remaining, implying that $\e\notin \intHull{k}
(\p_1,\dots,\p_n)$. Thus, for every $\mathbf{u}_e \in \mathbb{S}$,
$g(\mathbf{u}_e)>0$ and in particular, $g(\mathbf{u}_e^*) > 0$. Thus,
$\subscr{\beta}{max} = \arccos(g(\mathbf{u}_e^*)) < \pi/2$. Thus, the first claim is established.

\medskip

The second claim follows from the fact that there always exist some
$k$ pursuers $\p_{i_1},\dots,\p_{i_k}$ such that
$\theta_{j} \;<\; \pi/2$, for all $j \in \{i_1,\ldots,i_k\}$, and
since $\cos\subscr{\beta}{max} \leq \cos\theta_{j}$.
\end{proof}

\begin{remark}[General Position]
  Throughout this section, we assume that no two 
  pursuers are collinear with the evader, which implies that the
  vectors $\p_i(0)-\e(0)$ all have distinct orientations at $t=0$, for all $1\leq i \leq n$.
We could easily ensure this condition by an initial move by the pursuers, as follows.
Suppose $\angle \p_i(0)\e(0)\p_{j}(0) = 0$, for some $i,j$, where the
notation $\angle \p\,\x\,\q$ denotes the (smaller) angle between vectors
$\p-\x$ and $\q-\x$. Let the evader's
initial move is from position $\e(0)$ to $\e(1)$. Then, all the pursuers except $\p_i$ 
move parallel to $\e(1)-\e(0)$ with step size $\norm{\e(1)-\e(0)}$. The pursuer $\p_i$ 
also moves with step size $\norm{\e(1)-\e(0)}$ but in a direction making a sufficiently 
small but positive angle $\alpha$ with $\e(1)-\e(0)$. Since $\intHull{k}(\p_1,\ldots,\p_n)$
is an open 
set and a continuous function of the pursuer locations, there exists a sufficiently 
small but positive angle $\alpha$ so that $\e(1)$ still lies inside $\intHull{k}(\p_1,\ldots,\p_n)$ at time $t=1$. If there are multiple collinearities, then the same
strategy can be used to break all of them while preserving the invariant that
the evader lies inside the $k$-Hull. 
\end{remark}

We are now ready to describe our $k$-capture strategy and prove its correctness.

\subsection{A Strategy for {\large $k$}-Capture}\label{sec:strategy}

\medskip

One simple-minded strategy for capture is to let each pursuer
maximally advance towards the evader's new position at each
move. Because the evader lies in $\intHull{k}$, this strategy reduces
at least one pursuer's distance to $\e$.  But it does not ensure that
$k$ pursuers reach the evader simultaneously and so cannot guarantee
$k$-capture. Instead, we let only those pursuers that are \emph{not
  the closest} to the evader execute this kind of move, while those
closest to the evader carry out a \emph{parallel} move that maintains
their distance and angle to the evader.  We call this the
\emph{advance move}.  More specifically, the pursuers who are closest
to the evader move to maintain their distance and angle to the evader,
while the remaining pursuers advance towards the evader.
Unfortunately, while this strategy keeps the pursuers safe, it also
keeps them away from the evader, and in the worst-case all the
pursuers may become equidistant to the evader and then stagnate.  We,
therefore, introduce a second move, called the \emph{cone move}, which
ensures that the distance of the closest pursuers itself decreases but
in such a way that at least $k$ pursuers remain closest to the evader.

\medskip

The following algorithm describes at a pseudo-code level the overall
strategy.  The terms $\subscr{\PP}{closest}$ and $\Cone$,
respectively, denote the set of closest pursuers and a Cone region,
and are defined precisely following the algorithm.

\begin{algorithm}[htbp]
  \KwAssumes{$\e(0)$ satisfies the $k$-Hull necessary condition.} %
  \textbf{For} each $t = 1,2, \ldots$ and for each $j\in\{1,\dots,n\}$,\\
  Determine $\subscr{d}{min}(t) = \norm{\p-\e(t)}$, where $\p \in \subscr{\PP}{closest}(t)$\\
  \eIf{\textup{$\p_j$ is among $k$ pursuers
      $\p_{i_1},\ldots,\p_{i_k}$ that are in
      $\subscr{\PP}{closest}(t)$ and in $\Cone(k,t)$}}
  {
   $\p_j$ uses Cone move corresponding to $\p_{i_1},\ldots,\p_{i_k}$
}{
   $\p_j$ uses Advance move with parameter $\subscr{d}{min}(t)$
}
  \textbf{end for} \\

\caption{\bf $k$-Capture}
\label{algo:strategy_k}
\end{algorithm}

In the following, we give a precise definition of the Advance move and
the Cone move.  Informally, the \emph{Advance} move is used by a
pursuer to reduce its distance from the evader if it is sufficiently
far from the evader.  The \emph{Cone} move is used by a pursuer
\emph{together} with at least $k-1$ other pursuers, if all of them are among the
closest to the evader, and if the evader has made a move which is favorable
for those pursuers.  When both the moves are possible for a pursuer, the
Cone move has the priority.

\begin{definition}[Advance Move]\label{def:planes}
	Suppose the evader moves from $\e(t)$ to $\e(t+1)$. Then, given a parameter 
	$d\geq 0$, the \emph{Advance move} of a pursuer $\p_j$ is the following:
	\begin{itemize}	\advance\itemsep by -4pt
	\item Draw a line through $\e(t+1)$ parallel to the vector $\p_j(t)-\e(t)$. 
	\item Move to the position $\p_j(t+1)$ on this line for which
	$\abs{d-\norm{\e(t+1)-\p_j(t+1)}}$ is minimized. 
	\end{itemize}
\end{definition}

\begin{figure}[htbp]
\centering 
\includegraphics[width=\columnwidth]{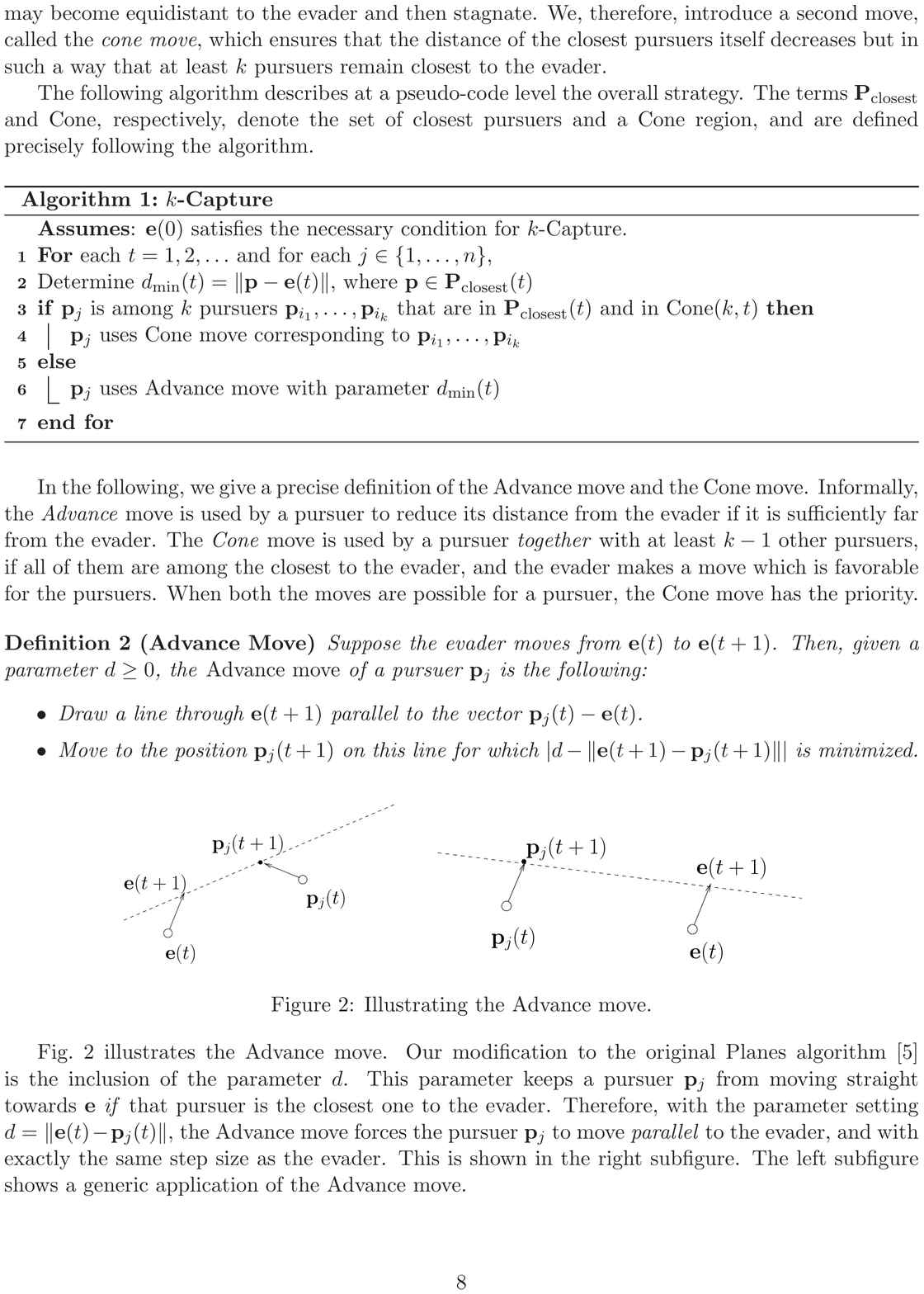}
\caption{Illustrating the Advance move.}
\label{fig:planes}
\end{figure}

Fig.~\ref{fig:planes} illustrates the Advance move.  Our modification
to the original Planes algorithm~\cite{SK-CVR:05} is the inclusion of
the parameter $d$.  This parameter keeps a pursuer $\p_j$ from moving
straight towards $\e$ \emph{if} that pursuer is the closest one to the
evader.  Therefore, with the parameter setting $d =
\norm{\e(t)-\p_j(t)}$, the Advance move forces the pursuer $\p_j$ to
move \emph{parallel} to the evader, and with exactly the same step
size as the evader. This is shown in the right subfigure. The left
subfigure shows a generic application of the Advance move.

We now describe the Cone move, which is used by $k$ or more pursuers
when they are among the closest pursuers to the evader, and when they
are located inside a Cone region, which we define next.  We show later
(cf.~Lemma~\ref{lem:strategy_k}) that after a finite time, there will
be at least $k$ closest pursuers, so the following discussion focuses
on such pursuers.

\medskip

Let $\subscr{\PP}{closest}(t)$ denote the set of pursuers that are closest
to the evader $\e(t)$ at time $t$. That is,

\[
\subscr{\PP}{closest}(t):= \{\p_i(t) \,: \, i \in \argmin_{1,\dots,n} \norm{\p_i(t)-\e(t)} \}.
\]

\begin{definition}[Cone]
	The closed positive cone formed with vertex at $\e(t)$, the
        axis along $\e(t+1)-\e(t)$ (i.e., along $\mathbf{u}_e(t)$),
        and with half angle equal to $\subscr{\beta}{max}$ is
	called the $\Cone(k,t)$.
\end{definition}

\begin{definition}[Cone Move]\label{def:cone_k}
  	If some $k$ pursuers $\p_{i_1}(t),\ldots,\\ \p_{i_k}(t)$ are in 
        $\subscr{\PP}{closest}(t)$ and also in $\Cone(k,t)$, then the Cone move for
	$\p_{i_1},\ldots,\p_{i_k}$ is defined as follows:
	\begin{itemize}\advance\itemsep by -4pt
	\item draw a line $l_j$ through $\e(t+1)$, parallel to
          $\p_j(t)-\e(t)$, for all $j \in \{i_1,\ldots, i_k\}$.
	\item $\p_j(t+1)$ is the point on line $l_j$ that minimizes
          $\norm{\p_j(t+1)-\e(t+1)}$ subject to the constraint that \\
          $\norm{\p_{i_1}(t+1)-\e(t+1)}=\ldots=\norm{\p_{i_k}(t+1)-\e(t+1)}$.
	\end{itemize}
\end{definition}

\begin{figure}[htbp]
\centering 
\includegraphics[width=0.6\columnwidth]{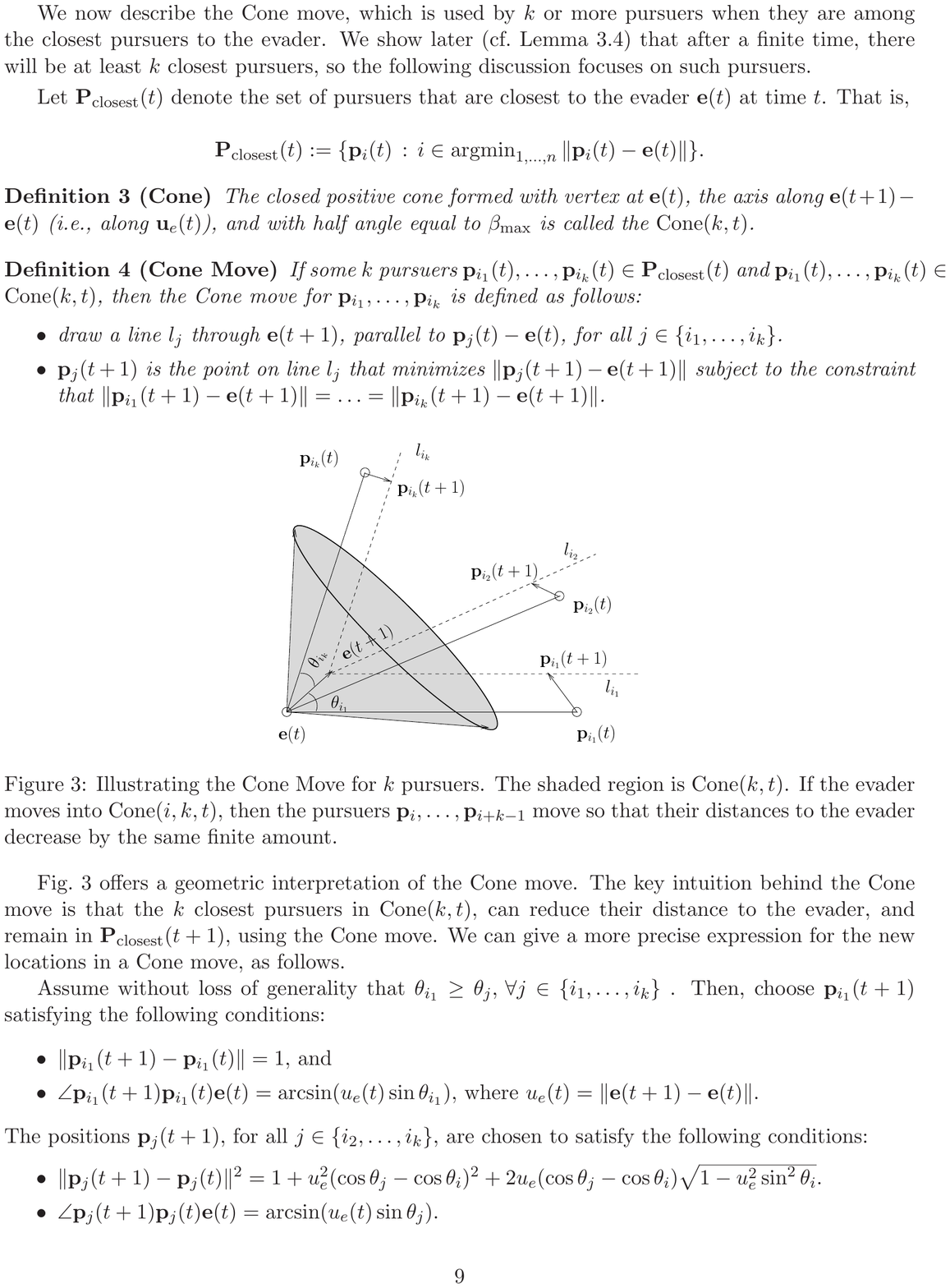}
\caption{Illustrating the Cone Move for $k$ pursuers. The shaded region is
  $\Cone(k,t)$. If the evader moves into $\Cone(k,t)$, then the
  pursuers $\p_{i_1},\dots,\p_{i_k}$ move so that their distances to the
  evader decrease by the same finite amount.}
\label{fig:closein_k}
\end{figure}

Fig.~\ref{fig:closein_k} offers a geometric interpretation of the Cone move.
The key intuition behind the Cone move is that the $k$ closest pursuers in $\Cone(k,t)$ can reduce their distance to the evader,
and remain in $\subscr{\PP}{closest}(t+1)$, using the Cone move.
We can give a more precise expression for the new locations in a Cone move,
as follows.

\medskip

Assume without loss of generality that $\theta_{i_1}\geq \theta_{j},
\, \forall j \in \{i_1,\ldots,i_{k}\}$ . Then, choose 
$\p_{i_1}(t+1)$ satisfying the following conditions:

\begin{itemize}	\advance\itemsep by -4pt
\item $\norm{\p_{i_1}(t+1)-\p_{i_1}(t)} = 1$, and
\item $\angle \p_{i_1}(t+1)\p_{i_1}(t)\e(t) = \arcsin(u_e(t)\sin\theta_{i_1})$,
where $u_e(t) = \norm{\e(t+1)-\e(t)}$.
\end{itemize}
The positions $\p_j(t+1)$, for all $j \in\{i_2,\ldots, i_k\}$, are chosen to
satisfy the following conditions:
\begin{itemize}	\advance\itemsep by -4pt
\item$\norm{\p_{j}(t+1)-\p_{j}(t)}^2 = 1 + u_e^2(\cos\theta_{j}-\cos\theta_i)^2 + 
  2u_e(\cos\theta_{j}-\cos\theta_i)\sqrt{1-u_e^2\sin^2\theta_i}$.
\item $\angle\p_{j}(t+1)\p_{j}(t)\e(t) = \arcsin(u_e(t)\sin\theta_{j})$.
\end{itemize}

\medskip

\subsection{Proof of {\large $k$}-Capture Sufficiency}\label{sec:strategy_k}

\medskip

In the rest of this section, we prove that
Algorithm~\ref{algo:strategy_k} succeeds. 
We begin with the observation that $k$-Capture is
orientation-preserving, since throughout the algorithm, the direction
of the vectors $\p_j-\e$ remains invariant for each $j$.

\begin{proposition}[Orientation Preserving]\label{prop:invariants_k}
	The Algorithm $k$-Capture is orientation-preserving.
\end{proposition}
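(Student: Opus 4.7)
The plan is a straightforward induction on discrete time $t$, with the base case $t=0$ being vacuous. Because the pursuers' moves are defined independently of one another, for the inductive step it suffices to analyze one pursuer $\p_j$ at a time and to show that the unit direction of $\p_j - \e$ is unchanged after one round.

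By Algorithm~\ref{algo:strategy_k}, at time $t$ the pursuer $\p_j$ executes either the Advance move or the Cone move. The definitions of both moves share the essential geometric feature that $\p_j(t+1)$ is chosen on the line through $\e(t+1)$ parallel to $\p_j(t) - \e(t)$. This immediately makes $\p_j(t+1) - \e(t+1)$ a scalar multiple of $\p_j(t) - \e(t)$, so the two vectors are collinear. All that remains is to argue that the scalar is strictly positive, i.e., that $\p_j(t+1)$ lies on the same side of $\e(t+1)$ as $\p_j(t)$ was of $\e(t)$.

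For the Cone move, the explicit angular conditions $\angle \p_j(t+1)\p_j(t)\e(t) = \arcsin(u_e(t)\sin\theta_j)$ together with $\norm{\p_j(t+1)-\p_j(t)} = 1$ and $\theta_j < \pi/2$ (which holds because $\p_j \in \Cone(k,t)$) single out the side-preserving choice of $\p_j(t+1)$. For the Advance move, the parameter $d = \subscr{d}{min}(t) \leq \norm{\p_j(t) - \e(t)}$ together with the evader's unit-speed bound $\norm{\mathbf{u}_e(t)} \leq 1$ implies, via the triangle inequality, that the point at distance $d$ from $\e(t+1)$ lying on the same ray as $\p_j(t) - \e(t)$ is always reachable within a unit pursuer step, while the opposite-side candidate requires a strictly larger displacement whenever the two candidates are distinct. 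In the special sub-case $d = \norm{\p_j(t) - \e(t)}$ (when $\p_j$ itself is among the closest), the Advance move collapses to the trivial parallel translation $\p_j(t+1) = \p_j(t) + \mathbf{u}_e(t)$, which manifestly preserves orientation.

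The main obstacle I anticipate is the degenerate case of the Advance move where $\p_j$ is so close to the evader that both signs on the parallel line are within a unit step. A direct comparison of the two candidate displacements still favors the same-side choice except when the evader's move is essentially aligned with $\p_j(t) - \e(t)$; for the boundary case, an explicit tie-breaking convention selecting the same-side minimizer settles the issue. Beyond this subtlety, the proof is little more than pointing at the parallel-line construction in the two move definitions.
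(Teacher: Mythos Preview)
Your argument is correct and follows the same line as the paper, which in fact offers no proof at all beyond the one-sentence observation that both the Advance and Cone moves place $\p_j(t+1)$ on the line through $\e(t+1)$ parallel to $\p_j(t)-\e(t)$. Your treatment is more careful than the paper's; the only minor slip is the claim that the opposite-side candidate ``requires a strictly larger displacement,'' which can fail when the evader steps past a very close pursuer, but this is exactly the tie case you already flag and resolve by convention.
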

Our proof of $k$-Capture depends on three technical lemmas showing, respectively, 
that some $k$ pursuers become closest to the evader, that every 
cone move reduces the minimum distance by a finite amount, and that irrespective of 
the evader's strategy, the minimum distance decreases by a finite amount.
Throughout the following discussion, it is assumed that the pursuers all follow 
the Algorithm $k$-Capture.

\medskip

The bound on the capture time depends on 
$\subscr{d}{min}(0):=\min_{i=1}^n \norm{\p_i(0)-\e(0)}$ and
$\subscr{d}{max}(0):=\max_{i=1}^n \norm{\p_i(0)-\e(0)}$,
which are the minimum and the maximum distance between a pursuer
and the evader at the initial time $t=0$.
The following lemma proves the closest pursuers property.

\begin{lemma}[$\subscr{\PP}{closest}$ cardinality]\label{lem:strategy_k}
 After a finite time upper
  bounded by $n(1+\subscr{d}{max}/\subscr{\beta}{max})$, some $k$
  pursuers are in the set $\subscr{\PP}{closest}$.
\end{lemma}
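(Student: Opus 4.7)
My plan is to identify a potential function that tracks progress during the phase when $|\subscr{\PP}{closest}(t)| < k$, and combine it with the observation that pursuers never leave $\subscr{\PP}{closest}$ once inside. Throughout this phase the algorithm forces every pursuer to use the Advance move with parameter $d=\subscr{d}{min}(t)$, so the first task is to verify two structural facts: the minimum distance $\subscr{d}{min}(t)$ is constant, and every pursuer already in $\subscr{\PP}{closest}$ stays there. Both facts follow from a short coordinate computation in the frame $v_j = (\p_j(t)-\e(t))/d_j$: writing $a = \mathbf{u}_e \cdot v_j$ and $r = \sqrt{1 - \|\mathbf{u}_e\|^2 + a^2}$, the set of reachable new distances along the line through $\e(t+1)$ parallel to $v_j$ is the interval $[d_j - a - r,\, d_j - a + r]$. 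Since $r \geq |a|$, this interval contains $\subscr{d}{min}$ whenever $d_j = \subscr{d}{min}$, so the closest pursuers remain exactly at distance $\subscr{d}{min}$; for $d_j > \subscr{d}{min}$ the minimization in the definition of Advance never places the pursuer strictly below $\subscr{d}{min}$.

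Next I would quantify the per-step distance reduction of a \emph{favorable} non-closest pursuer, meaning one with $\theta_j \leq \subscr{\beta}{max}$. In the same coordinates $a = \|\mathbf{u}_e\|\cos\theta_j$, so the reduction from $d_j$ to the lower endpoint of the reachable interval is
\[
a + r \;=\; \|\mathbf{u}_e\|\cos\theta_j + \sqrt{1-\|\mathbf{u}_e\|^2\sin^2\theta_j}.
\]
For $\theta_j \leq \subscr{\beta}{max} < \pi/2$, a one-variable minimization of this expression over $\|\mathbf{u}_e\| \in [0,1]$ yields a strictly positive lower bound $\delta = \delta(\subscr{\beta}{max}) > 0$. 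Hence a favorable non-closest pursuer decreases its distance in one Advance step by at least $\min(d_j - \subscr{d}{min},\, \delta)$.

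To conclude, I would invoke Lemma~\ref{lem:initial_k}, which supplies at least $k$ favorable pursuers at every time; because $|\subscr{\PP}{closest}(t)| < k$ in this phase, at least one such favorable pursuer lies outside $\subscr{\PP}{closest}(t)$ and is forced by the algorithm to use Advance. Thus at each step of the phase, either (a) that favorable pursuer closes its gap and joins $\subscr{\PP}{closest}$, or (b) the potential $\Phi(t) := \sum_j (d_j(t) - \subscr{d}{min})$ drops by at least $\delta$. Since $\subscr{\PP}{closest}$ is non-shrinking by the first paragraph's invariants, (a) occurs at most $n$ times, and $\Phi(0) \leq n\subscr{d}{max}$ bounds the number of (b) steps by $n\subscr{d}{max}/\delta$, giving the stated bound. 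The main obstacle I anticipate is pinning down $\delta(\subscr{\beta}{max})$ carefully enough to match the precise $\subscr{d}{max}/\subscr{\beta}{max}$ dependence, uniformly over the evader's choice of $\|\mathbf{u}_e\| \in [0,1]$; the dependence on the evader's speed is the subtle point, since the worst $\|\mathbf{u}_e\|$ depends on $\theta_j$.
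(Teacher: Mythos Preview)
Your approach is correct and tracks the paper's proof closely: at every step before $|\subscr{\PP}{closest}|\geq k$, Lemma~\ref{lem:initial_k} supplies a favorable pursuer (one with $\theta_j\leq\subscr{\beta}{max}$) outside $\subscr{\PP}{closest}$, whose Advance move either lands it in $\subscr{\PP}{closest}$ or cuts its distance by a fixed amount. Your potential-function packaging and explicit verification of the invariants ($\subscr{d}{min}$ constant, $\subscr{\PP}{closest}$ monotone) are more thorough than the paper's terse two-sentence justification of the final bound, but the mechanism is identical.

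Your anticipated obstacle dissolves on inspection. With $u=\|\mathbf{u}_e\|$ and $\theta=\theta_j$, the function $f(u)=u\cos\theta+\sqrt{1-u^{2}\sin^{2}\theta}$ satisfies $f'(0)=\cos\theta>0$, so any interior critical point is a maximum and the minimum over $u\in[0,1]$ is at an endpoint: $\min\bigl(f(0),f(1)\bigr)=\min(1,2\cos\theta)\geq\cos\theta\geq\cos\subscr{\beta}{max}$. Hence $\delta=\cos\subscr{\beta}{max}$, which is precisely the constant the paper invokes; the bare $\subscr{\beta}{max}$ in the displayed bound of the lemma statement is a typo for $\cos\subscr{\beta}{max}$, as the paper's own proof and the restatement in Lemma~\ref{lem:dist_k} confirm.
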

\begin{proof}
  From statement 2 of
  Lemma~\ref{lem:initial_k}, at every instant of time and for any move
  of the evader, there exists some $k$ pursuers
  $\p_{i_1},\dots,\p_{i_k}$ such that for all $j \in \{ i_1,\ldots, i_k\}$,
  $\theta_j \leq \subscr{\beta}{max}$. If all of
  these $k$ pursuers are in $\subscr{\PP}{closest}(t)$, then this result
  stands proved. Otherwise, for every $t$, there exists some pursuer
 (say $\p_j(t)$) out of the $k$ pursuers, which is not in
  $\subscr{\PP}{closest}(t)$, and is such that
  $\theta_j \leq \subscr{\beta}{max}$. So at time
  $t+1$, the Advance move by $\p_j$ will ensure that either
  $\norm{\p_j(t+1)-\e(t+1)}\leq
  \norm{\p_j(t)-\e(t)}-\cos\subscr{\beta}{max}$ or
  $\p_j(t+1)\in\subscr{\PP}{closest}(t+1)$.

\medskip

Thus, in the worst case, after at most
$n(1+\subscr{d}{max}/\cos\subscr{\beta}{max})$ time instants, some $k$
pursuers must be $\subscr{\PP}{closest}$.
\end{proof}

Let $\subscr{d}{min}(t)$ be the distance of the closest pursuer from the 
evader at time $t$. Once $k$ pursuers are in $\subscr{\PP}{closest}$, 
the following lemma establishes a lower bound on the decrease of $\subscr{d}{min}$
assuming that a Cone move occurs, which is favorable for the pursuers. 

\begin{lemma} \label{lem:simcap_k} Let $\p_{i_1}, \dots, \p_{i_k} \in
  \subscr{\PP}{closest}$ be $k$ pursuers closest to the evader at time
  $t$. If these pursuers' next move is a Cone move, then after the
  pursuers' move, we have
\[
\subscr{d}{min}(t+1) \leq \subscr{d}{min}(t) - \cos\subscr{\beta}{max}.
\]
\end{lemma}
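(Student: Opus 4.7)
The plan is a direct trigonometric computation that reads the distance reduction off the Cone move definition. I would first set up coordinates local to the evader at time $t$: for each $j \in \{i_1,\dots,i_k\}$ let $\mathbf{v}_j := (\p_j(t)-\e(t))/\subscr{d}{min}(t)$ be the unit vector from the evader to pursuer $j$, so the line $l_j$ of Definition~\ref{def:cone_k} is parametrized as $\e(t+1)+s\mathbf{v}_j$, and write $\p_j(t+1)=\e(t+1)+s_j\mathbf{v}_j$. The equidistance constraint of the Cone move forces $s_{i_1}=\dots=s_{i_k}$, so $\subscr{d}{min}(t+1)\le s_{i_1}$, and it suffices to prove $\subscr{d}{min}(t)-s_{i_1}\ge\cos\subscr{\beta}{max}$.

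Next I would extract this reduction from the unit-speed constraint on the distinguished pursuer $\p_{i_1}$, which by the paper's construction has the largest $\theta_{i_1}$ among the $k$ cone-movers and uses its full step of length one. Its displacement is
\[
\p_{i_1}(t+1)-\p_{i_1}(t) \;=\; \mathbf{u}_e(t) + (s_{i_1}-\subscr{d}{min}(t))\,\mathbf{v}_{i_1},
\]
and squaring, together with $\mathbf{u}_e(t)\cdot\mathbf{v}_{i_1} = u_e\cos\theta_{i_1}$ for $u_e := \norm{\mathbf{u}_e(t)} \le 1$, turns the unit-length condition into a quadratic in $\Delta := \subscr{d}{min}(t)-s_{i_1}$:
\[
\Delta^2 - 2u_e\cos\theta_{i_1}\,\Delta + (u_e^2-1) \;=\; 0.
\]
The relevant root is $\Delta = u_e\cos\theta_{i_1} + \sqrt{1-u_e^2\sin^2\theta_{i_1}}$; one must take the larger root because the Cone move minimizes $\norm{\p_j(t+1)-\e(t+1)}$, which is equivalent to maximizing $\Delta$.

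Finally I would lower bound $\Delta$ using the Cone precondition that each chosen pursuer lies in $\Cone(k,t)$, so $\theta_{i_1}\le\subscr{\beta}{max}$. Since $u_e\in[0,1]$, we have $u_e^2\sin^2\theta_{i_1}\le\sin^2\theta_{i_1}$, and therefore
\[
\sqrt{1-u_e^2\sin^2\theta_{i_1}} \;\ge\; \sqrt{1-\sin^2\theta_{i_1}} \;=\; \cos\theta_{i_1} \;\ge\; \cos\subscr{\beta}{max};
\]
since $u_e\cos\theta_{i_1}\ge 0$, this already gives $\Delta\ge\cos\subscr{\beta}{max}$, and combined with $\subscr{d}{min}(t+1)\le s_{i_1}=\subscr{d}{min}(t)-\Delta$ this is exactly the claim. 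There is no serious obstacle here; the only delicate point is selecting the correct (larger) root of the quadratic, which is unambiguous once one reads off from Definition~\ref{def:cone_k} that the Cone move picks the smallest $s_{i_1}$ consistent with the unit-speed and equidistance constraints.
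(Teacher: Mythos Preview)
Your argument is correct and is essentially the paper's own proof: both compute the distance reduction for the pursuer with the largest angle as $u_e\cos\theta + \sqrt{1-u_e^2\sin^2\theta}$ and then lower-bound this by $\cos\theta \ge \cos\subscr{\beta}{max}$. The only difference is cosmetic: you derive this formula by solving the quadratic coming from the unit-step constraint, whereas the paper reads it off directly from the explicit Cone-move angle $\arcsin(u_e\sin\theta_{i_1})$ given just before Section~\ref{sec:strategy_k}; you are also slightly more careful in writing $\subscr{d}{min}(t+1)\le s_{i_1}$ rather than equality.
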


\begin{proof}
Let $\theta_j$ be the largest among the angles
$\theta_{i_1},\dots,\theta_{i_k}$. Using the new locations of the pursuers in the Cone move, we obtain,
\begin{align*}
\subscr{d}{min}(t)-\subscr{d}{min}(t+1) &= u_e\cos\theta_j +
1\cdot \cos \angle \p_j(t+1)\p_j(t)\e(t) \\
&= u_e\cos\theta_j+\sqrt{1-u_e^2\sin^2\theta_j}\\
&\geq \cos\theta_j \geq \cos\subscr{\beta}{max},
\end{align*}
since $\theta_j\leq
\subscr{\beta}{max}$ from the definition of the Cone region. The lemma
follows.
\end{proof}

Finally, the next lemma derives a lower bound on the decrease of $\subscr{d}{min}$ 
for the worst-case evader move, while the pursuers follow the strategy of 
Algorithm $k$-Capture.

\begin{lemma}   \label{lem:dist_k}
  If some $k$ pursuers become closest to the
  evader at some time $t$, then the following holds:
\begin{itemize}	\advance\itemsep by -4pt
\item after every subsequent pursuer move, some $k$ pursuers are in $\subscr{\PP}{closest}$, and
	\item after at most
          $n(1+\subscr{d}{max}/\cos\subscr{\beta}{max})$ pursuer moves,
  	$\subscr{d}{min}$ decreases by at least $\cos\subscr{\beta}{max}$.
	\end{itemize}
\end{lemma}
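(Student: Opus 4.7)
The plan is to establish the two bullets separately, leaning on Lemmas~\ref{lem:initial_k} and~\ref{lem:simcap_k} and on a careful case analysis of Algorithm~\ref{algo:strategy_k}. For the first bullet, I would proceed by induction on $t$ starting from the instant when $\subscr{\PP}{closest}$ first acquires cardinality at least $k$, and case-split on which branch of the algorithm executes at step $t$. If a Cone move fires, Lemma~\ref{lem:simcap_k} places the $k$ designated pursuers at a common distance $\subscr{d}{min}(t+1) \leq \subscr{d}{min}(t) - \cos\subscr{\beta}{max}$, while every other pursuer executes Advance with parameter $\subscr{d}{min}(t)$ and therefore ends at distance at least $\subscr{d}{min}(t)$; so the $k$ Cone-movers themselves form $\subscr{\PP}{closest}(t+1)$. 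If the step is Advance-only, the pursuers already in $\subscr{\PP}{closest}(t)$ execute a pure parallel translation of step size $u_e(t)$ and retain distance exactly $\subscr{d}{min}(t)$, while every other pursuer cannot drop strictly below the target $\subscr{d}{min}(t)$, so the previous closest set survives into $\subscr{\PP}{closest}(t+1)$.

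For the second bullet, I would bound the length of any maximal run of consecutive non-Cone steps. In such a step Lemma~\ref{lem:initial_k} statement~2 supplies $k$ favorable pursuers whose angle to $\mathbf{u}_e(t)$ is at most $\subscr{\beta}{max}$, placing them automatically in $\Cone(k,t)$; hence the only reason a Cone move did not fire is that at least one such favorable pursuer, say $\p_j$, is absent from $\subscr{\PP}{closest}(t)$. Analyzing its Advance move yields a dichotomy: $\p_j$ either (a) reaches distance exactly $\subscr{d}{min}(t)$ in a single step and permanently joins $\subscr{\PP}{closest}$ (by the first bullet, no pursuer can leave $\subscr{\PP}{closest}$ while only Advance moves occur), or (b) cannot reach that target and instead shrinks its own distance by at least $\cos\theta_j \geq \cos\subscr{\beta}{max}$. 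Type (a) occurs at most $n-k$ times overall. For each individual pursuer, type (b) events can be amortized against a distance credit bounded by $\subscr{d}{max}$, giving at most $\subscr{d}{max}/\cos\subscr{\beta}{max}$ per pursuer, for a total of at most $n(1+\subscr{d}{max}/\cos\subscr{\beta}{max})$ non-Cone steps. Once this budget is exhausted every pursuer lies in $\subscr{\PP}{closest}$ and the next step is forced to be a Cone move, which by Lemma~\ref{lem:simcap_k} drops $\subscr{d}{min}$ by at least $\cos\subscr{\beta}{max}$.

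The main technical obstacle is verifying the (a)/(b) dichotomy for the parameterized Advance move rigorously. This entails parameterizing the line $\{\e(t+1) + s\,(\p_j(t)-\e(t))/\norm{\p_j(t)-\e(t)} : s \in \real\}$, computing the minimum distance $s_{\min}$ to $\e(t+1)$ reachable from $\p_j(t)$ in one unit step as a function of $u_e$ and $\theta_j$, and checking the inequality $d_j - s_{\min} \geq \cos\theta_j$ whenever $\subscr{d}{min}(t) < s_{\min}$. Everything else reduces to combinatorial bookkeeping that closely parallels the proof of Lemma~\ref{lem:strategy_k}.
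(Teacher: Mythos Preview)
Your proposal is correct and follows essentially the same route as the paper: the first bullet is handled by the same Cone-versus-Advance case split (the paper phrases it via two groups $A$ and $B$ in $\subscr{\PP}{closest}$, but the content is identical), and the second bullet is proved by the same ``either join $\subscr{\PP}{closest}$ or drop by $\cos\subscr{\beta}{max}$'' dichotomy for a favorable pursuer outside $\subscr{\PP}{closest}$, leading to the same worst-case count. Your amortization of the (a)/(b) events is in fact more explicit than the paper's terse ``in the worst-case'' assertion, and you correctly identify the geometric verification of the parameterized Advance move as the one point that requires real work---the paper simply asserts this step.
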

\begin{proof}
  Let $A$ and $B$ be two groups of pursuers in $\subscr{\PP}{closest}$
  at time $t$, of which group $A$ comprises of some $k$ pursuers. If
  all pursuers of group $A$ are in the Cone region at time $t$, then
  group $A$ will make a Cone move which ensures that all pursuers in
  $A$ are in $\subscr{\PP}{closest}$ at time $t+1$. Thus, the first
  claim trivially holds. Otherwise, all pursuers in $A$ move parallel
  to the evader at time $t+1$. Now, if group $B$ does not contain $k$
  pursuers, then at time $t+1$, all pursuers in group $B$ are forced
  to move parallel to the evader, since they do not satisfy the
  criterion to make a Cone move. Thus, the pursuers in group $A$
  satisfy the first claim at time $t+1$. Finally, if group $B$
  contains some $k$ pursuers and are in the Cone region at time $t$,
  then these $k$ pursuers make a Cone move and
  satisfy the first claim at time $t+1$. Thus, the first claim holds
  at all times.

\medskip

Now, let us consider the second claim.
From Proposition~\ref{prop:invariants_k} and statement 2 of
Lemma~\ref{lem:initial_k}, at every instant of time and for any move
of the evader, there exists some $k$ pursuers
$\p_{i_1},\dots,\p_{i_k}$ such that $\theta_j(t) \leq
\subscr{\beta}{max}$, for all $j\in\{i_1,\dots, i_k\}$. We need to consider two cases:

\begin{itemize}
\item {[All of $\p_{i_1}(t),\dots,\p_{i_k}(t)$ are in
    $\subscr{\PP}{closest}(t)$:]}\\
  In this case, the claim follows from Lemma~\ref{lem:simcap_k} because
  all of these pursuers lie in $\Cone(k,t)$.  

\item {[At least one of out of the $k$ pursuers, say $\p_j(t)$ is not in $\subscr{\PP}{closest}(t)$:]}\\
  Without loss of generality, assume that $\p_j (t) \not\in 
  \subscr{\PP}{closest}(t)$.  Then, at time $t+1$, the Advance move by
  $\p_j$ will ensure that either $\norm{\p_j(t+1)-\e(t+1)}\leq
  \norm{\p_j(t)-\e(t)}-\cos\subscr{\beta}{max}$ or
  $\p_j(t+1)\in\subscr{\PP}{closest}(t+1)$.
  Thus, in the worst-case, it requires at most
  $n(1+\subscr{d}{max}/\cos\subscr{\beta}{max})$ moves before all $n$
  pursuers are in $\subscr{\PP}{closest}$. Then, the next pursuer move
  is necessarily a Cone move, because for any choice of the evader
  move, there exists some $k$ pursuers which are now equidistant from
  the evader, which lie in the Cone region.  By
  Lemma~\ref{lem:simcap_k}, the distance of the $k$ closest pursuers
  from the evader strictly decreases by at least
  $\cos\subscr{\beta}{max}$.
\end{itemize}
This completes the proof of the lemma.
\end{proof}

We can now state our main theorem on $k$-Capture.

\begin{theorem}\label{thm:suff_k}
	If the evader lies in the interior of the pursuers' $k$-Hull at $t=0$,
	i.e., $\e(0) \in \intHull{k} (\p_1 (0), \ldots,  \p_n (0))$, then
	it can be $k$-Captured in at most 
	$n(1+ \subscr{d}{max}/\cos\subscr{\beta}{max})^2$ moves.
\end{theorem}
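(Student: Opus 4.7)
The plan is to combine Proposition~\ref{prop:invariants_k}, Lemma~\ref{lem:strategy_k}, Lemma~\ref{lem:simcap_k}, and Lemma~\ref{lem:dist_k} into a two-phase potential argument driven by $\subscr{d}{min}(t)$. Since Algorithm~\ref{algo:strategy_k} is orientation-preserving, starting from $\e(0) \in \intHull{k}(\p_1(0),\ldots,\p_n(0))$ guarantees that $\e(t)$ remains in the pursuers' $k$-Hull at every time $t$. By Lemma~\ref{lem:initial_k} this in turn fixes a single constant $\subscr{\beta}{max} < \pi/2$ that governs every distance decrement used below, independent of $t$.

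First I would invoke Lemma~\ref{lem:strategy_k} to conclude that after at most $n(1+\subscr{d}{max}/\cos\subscr{\beta}{max})$ moves, some $k$ pursuers enter $\subscr{\PP}{closest}$. By the first bullet of Lemma~\ref{lem:dist_k}, the $k$-closest property is then preserved at every subsequent step. Second, I would iterate the second bullet of Lemma~\ref{lem:dist_k} together with Lemma~\ref{lem:simcap_k}: from any state with $|\subscr{\PP}{closest}| \geq k$, within at most $n(1+\subscr{d}{max}/\cos\subscr{\beta}{max})$ further moves a Cone move fires, strictly reducing $\subscr{d}{min}$ by at least $\cos\subscr{\beta}{max}$. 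Since $\subscr{d}{min}(0) \leq \subscr{d}{max}$ and $\subscr{d}{min} \geq 0$, at most $\subscr{d}{max}/\cos\subscr{\beta}{max}$ such decrement blocks are needed before $\subscr{d}{min} = 0$. At that moment, the equidistance condition enforced by Definition~\ref{def:cone_k} ensures the $k$ pursuers in the Cone move all coincide with the evader, which is exactly a $k$-capture. Multiplying the per-block length by the number of blocks and absorbing the initial ``phase~1'' cost into the first block yields the total bound $n(1+\subscr{d}{max}/\cos\subscr{\beta}{max})^2$.

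The main obstacle I expect is justifying that $\subscr{d}{max}$, as it appears inside the per-block bound of Lemma~\ref{lem:dist_k}, can be pinned to its initial value throughout the pursuit; otherwise the per-block budget could grow from block to block and the counting would not close. The fix is to observe that $\max_j \norm{\p_j(t)-\e(t)}$ is monotone non-increasing under Algorithm~\ref{algo:strategy_k}: an Advance move with parameter $d = \subscr{d}{min}$ chooses the point on the line through $\e(t+1)$ parallel to $\p_j(t)-\e(t)$ that minimizes $|d - \norm{\e(t+1)-\p_j(t+1)}|$, so the pursuer never moves farther from $\e$ than its previous distance; and a Cone move strictly reduces the distances of the participating pursuers by Lemma~\ref{lem:simcap_k}. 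Combined with orientation preservation, this monotonicity lets us use the initial $\subscr{d}{max}$ uniformly across all blocks and completes the proof of the claimed capture time.
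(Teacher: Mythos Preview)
Your proposal is correct and follows essentially the same two-phase argument as the paper: invoke Lemma~\ref{lem:strategy_k} to reach a state with $|\subscr{\PP}{closest}|\geq k$, then iterate Lemma~\ref{lem:dist_k} to force at most $\subscr{d}{max}/\cos\subscr{\beta}{max}$ decrement blocks, each of length $n(1+\subscr{d}{max}/\cos\subscr{\beta}{max})$, and sum. Your additional observation that $\max_j\norm{\p_j(t)-\e(t)}$ is non-increasing under both the Advance and Cone moves is a point the paper takes for granted but does not state; it is needed to apply Lemma~\ref{lem:dist_k} with the \emph{initial} $\subscr{d}{max}$ uniformly across blocks, so including it is an improvement in rigor rather than a departure in method.
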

\begin{proof}
  By Lemma~\ref{lem:strategy_k}, after at most
\[
  n(1+\subscr{d}{max}/\subscr{\beta}{max})
\]
moves, some $k$ pursuers are in $\subscr{\PP}{closest}$. Thereafter,
Lemma~\ref{lem:dist_k} ensures that the distance of some $k$
closest pursuers to
the evader decreases by at least $\cos\subscr{\beta}{max}$ after every 
$n(1+ \subscr{d}{max}/ \cos\subscr{\beta}{max})$ moves.
Since capture is defined after the pursuers' move, after at most $n(1+
\subscr{d}{max}/\cos\subscr{\beta}{max})\subscr{d}{max}/\cos\subscr{\beta}{max}$ pursuer
moves, we obtain $\subscr{d}{min} = 0$, that is, the evader and some $k$ 
pursuers are coincident, which satisfies the conditions
of $k$-capture.
An upper bound on the time taken for the $k$-capture of the evader 
follows by summing the bounds of Lemma~\ref{lem:strategy_k} and Lemma~\ref{lem:dist_k}. 
This completes the proof of the theorem.
\end{proof}

\begin{remark}[Lower bound on Capture time]
  A lower bound on the time taken to capture is
  $\subscr{d}{max}/\cos\subscr{\beta}{max}$. To see this, consider the
  following initial condition and evader strategy. The evader's
  strategy is to move along a fixed vector $\mathbf{u}_e$ with unit
  step. Let $\p_1, \dots, \p_k$ be furthest from the evader initially,
  and be located on the boundary of the resulting $\Cone(k,0)$. The
  rest of the pursuers are located outside $\Cone(k,0)$. This evader
  strategy and the initial pursuer locations ensure that the evader
  is captured after a time of at least
  $\subscr{d}{max}/\cos\subscr{\beta}{max}$, independent of the
  pursuers' strategy.
\end{remark}

\section{Bounded Environments}\label{sec:compact}

In this section, we show a simple strategy for $k$-capture that always succeeds
in a compact and convex subset of a Euclidean space. If every pursuer were to 
use an established strategy
by Sgall~\cite{JS:01} independently of the other pursuers, at each
instant of time, then the distance between each pursuer and the evader
would decrease to zero, but at different instants in time. Although
this approach does not guarantee $k$-capture in general, it suggests
that intuitively, it should be possible to coordinate the moves of
each pursuer to achieve $k$-capture from any set of initial locations
in the environment. Therefore, in contrast with the previous sections
wherein there existed a necessary condition for $k$-capture, we will
now directly present a strategy which requires $k$ pursuers, and which
achieves $k$-capture of the evader in at most $O(D^2)$ time steps,
where $D$ is the diameter of the environment.

\medskip

Our strategy comprises of two phases. The first phase is an
\emph{initializing} move, which gets the pursuers in a favorable
formation so that they can apply the steps in the second phase. In
particular, the initializing move will show that it is possible to
achieve a configuration of the pursuers and the evader such that $k-1$
pursuers are located between a \emph{lead} pursuer and the evader.

\medskip

The second phase will mimic Sgall's strategy~\cite{JS:01} for the
lead pursuer, while the other $k-1$ pursuers will maintain an
invariant of being located between the lead pursuer and the
evader at all times. The initial locations of the pursuers being
sufficiently close to each other ensures that the evader gets captured
if it moves to the location of any pursuer. We show that this phase
terminates into the evader being $k$-captured.

Let us begin with the Initializing move.

\subsection{Initializing Move}

\medskip

In this phase, the pursuers first group themselves such that they are
located inside a sphere of radius equal to half. This essentially
means that every pursuer can reach the location of any other pursuer,
in one time step.

\medskip

Now, consider a closed sphere $O$ of radius half which contains the
pursuers at time $t = 0$. Let $\ell$ denote the intersection of the
sphere $O$ with line joining the evader's location at time $t=1$ to
the center of $O$. Now, independent of the location $\e(1)$, it is
always possible to find $k$ distinct locations $\p_1(1),\dots,\p_k(1)$
each contained in $\ell$, such that $\p_1(1),\dots,\p_k(1)$ are
collinear with $\e(1)$ and $\p_2(1),\dots,\p_k(1)$ lie between
$\p_1(1)$ and $\e(1)$. Figure~\ref{fig:initial} shows an illustration
of this move.

\begin{figure}[h]
\centering 
\includegraphics[width=0.3\columnwidth]{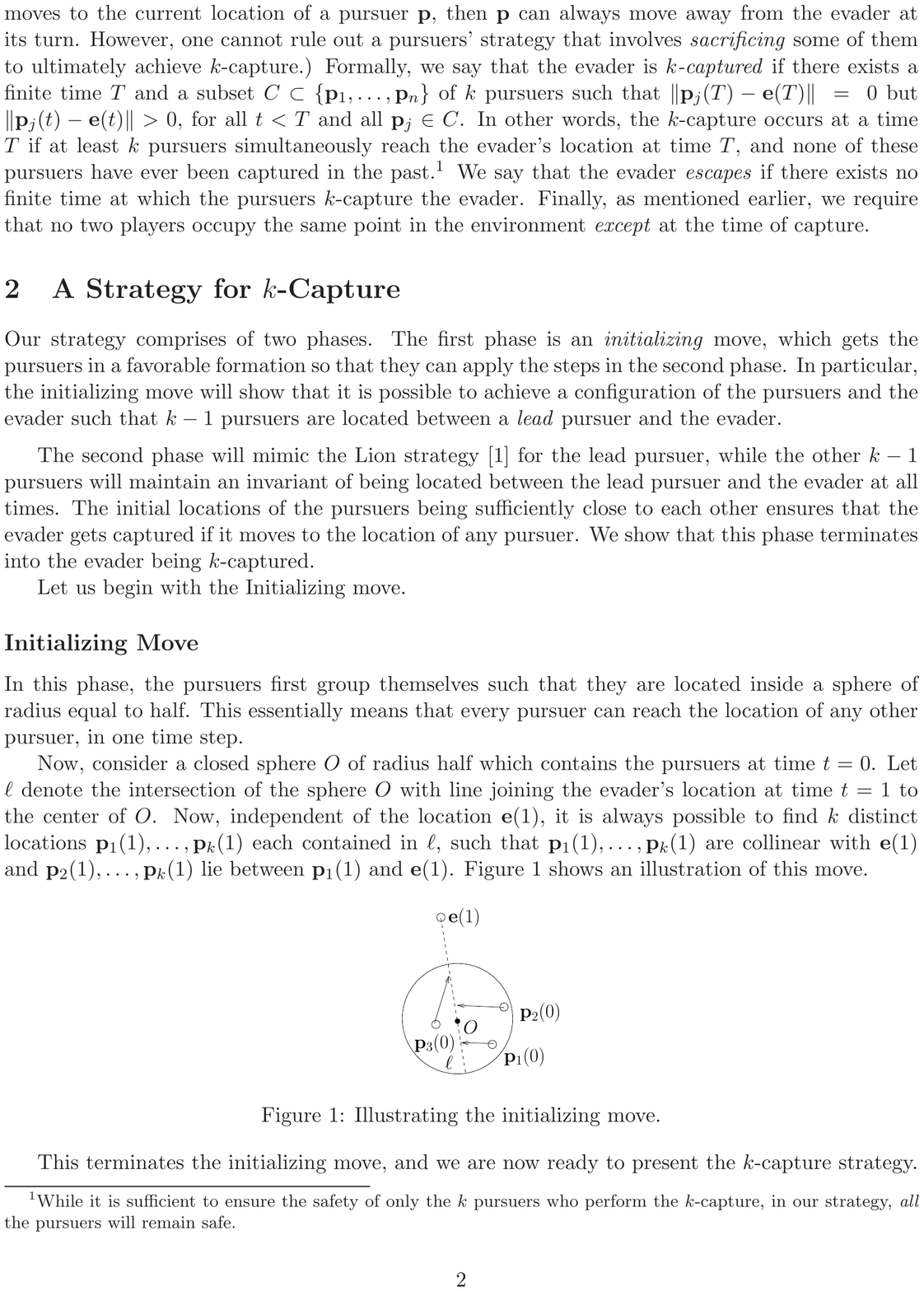}
\caption{Illustrating the initializing move. It is always possible to
  ensure that the pursuers are collinear with the evader and within a
  unit distance of each other. In this figure, the circle centered at
  $O$, has radius equal to half unit.}
\label{fig:initial}
\end{figure}

This terminates the initializing move, and we are now ready to present
the $k$-capture strategy.

\medskip

\subsection{An algorithm for $k$-Capture}

\medskip

At each time instant $t$, $\p_1$ makes the Sgall move, described as
below.

\begin{enumerate}
\item Join $\e(t-1)$ and $\p_1(t-1)$ and extend it beyond $\p_1(t-1)$
  to intersect the environment at $C$. 
\item Move to the point closest to $\e(t)$ and on the line joining
  $\e(t)$ and $C$.
\end{enumerate}

All other pursuers pick distinct points between $\p_1(t)$ and
$\e(t)$. This strategy is illustrated in Figure~\ref{fig:sgall}, and
is summarized in Algorithm~\ref{algo:compact}.  

\begin{figure}[htbp]
\centering 
\includegraphics[width=0.3\columnwidth]{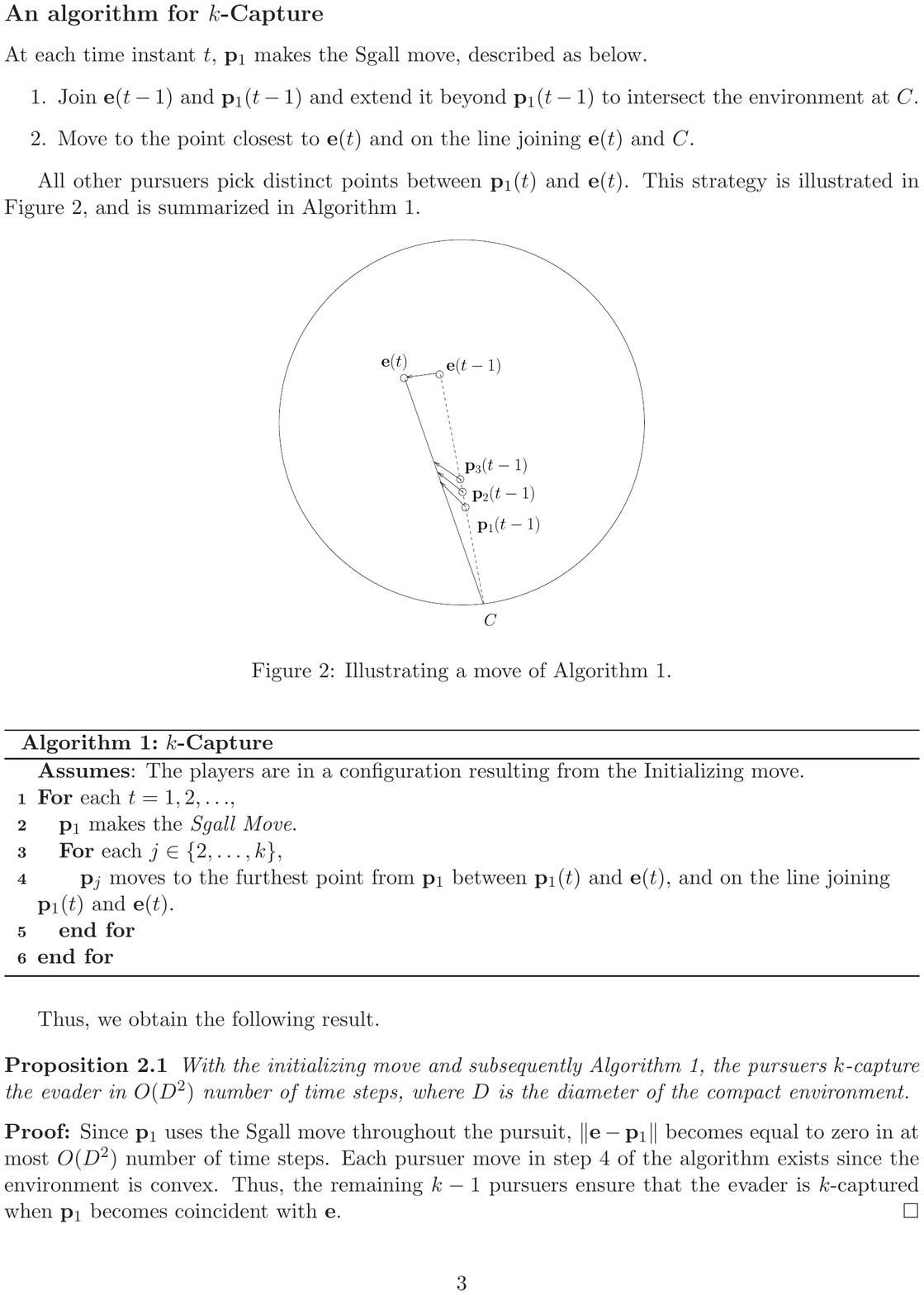}
\caption{Illustrating a move of Algorithm~\ref{algo:compact}. Pursuer
  $\p_1$ follows the Sgall move, while all the others pick distinct points
  between $\p_1$ and $\e$ to move to.}
\label{fig:sgall}
\end{figure}

\begin{algorithm}[h]
 \KwAssumes{The players are in a configuration resulting from the
    Initializing move.} %
  \textbf{For} each $t = 1,2, \ldots$,\\
\quad $\p_1$ makes the \emph{Sgall Move}.\\
\quad \textbf{For} each $j\in \{2,\dots,k\}$,\\
\qquad $\p_j$ moves to the furthest point from $\p_1$ between $\p_1(t)$ and
$\e(t)$, and on the line
joining $\p_1(t)$ and $\e(t)$. \\
\quad \textbf{end for} \\
 \textbf{end for} \\
\caption{\bf Sgall-like strategy}
\label{algo:compact}
\end{algorithm}

Thus, we obtain the following result.

\begin{proposition}
With the initializing move and subsequently
Algorithm~\ref{algo:compact}, the pursuers $k$-capture the evader
in $O(D^2)$ number of time steps, where $D$ is the diameter of the
compact environment.
\end{proposition}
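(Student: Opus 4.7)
The plan is to reduce the proposition to Sgall's single-pursuer theorem for compact convex environments, with the auxiliary pursuers $\p_2,\ldots,\p_k$ acting as an escort that rides along the segment from $\p_1$ to $\e$ and collapses onto $\e$ at precisely the moment $\p_1$ does. First, I would check the feasibility of the initializing move: each pursuer starts in the closed sphere $O$ of radius $1/2$ and is reassigned to a point on the chord $\ell\subset O$; since both its origin and destination lie in $O$, its displacement is at most the diameter $1$ of $O$, respecting the unit-speed bound. After this step, $\p_1(1),\ldots,\p_k(1)$ are collinear with $\e(1)$ in the required order, and each pair of consecutive pursuers lies within distance $1$ of the next.

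The core of the argument is an invariant maintained throughout Algorithm~\ref{algo:compact}: at every $t\geq 1$, the pursuers $\p_1(t),\p_2(t),\ldots,\p_k(t)$ lie in order on the closed segment $[\p_1(t),\e(t)]$, with $\p_2(t),\ldots,\p_k(t)$ at distinct positions strictly between $\p_1(t)$ and $\e(t)$. The inductive step relies on a simple observation: if $\p_1$ moves by $v_1$ and $\e$ moves by $v_e$ with $\norm{v_1},\norm{v_e}\leq 1$, then the point $q=(1-\lambda)\p_1(t)+\lambda\e(t)$ on the old segment maps to the point $q+(1-\lambda)v_1+\lambda v_e$ on the new segment $[\p_1(t+1),\e(t+1)]$, at distance $\norm{(1-\lambda)v_1+\lambda v_e}\leq 1$. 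Hence each escort pursuer $\p_j$ can move to (or, as prescribed by Algorithm~\ref{algo:compact}, farther toward $\e$ than) its corresponding interpolation point within its unit-speed budget, preserving collinearity, ordering, and distinctness as long as the segment has positive length.

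With the invariant in hand, I would invoke Sgall's theorem~\cite{JS:01}: using the Sgall move, $\p_1$ alone catches $\e$ within $O(D^2)$ time steps in a compact convex environment of diameter $D$. At the capture time $T$, $\p_1(T)=\e(T)$, so the segment degenerates to a single point and the invariant forces $\p_2(T)=\cdots=\p_k(T)=\e(T)$ simultaneously, which is exactly the $k$-capture event. Because the initializing move adds only $O(1)$ overhead, the total time is $O(D^2)$.

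The hard part, in my view, is the careful bookkeeping for the escort pursuers: ensuring they never exceed unit speed, never collide with one another, and never occupy $\e$'s location before the terminal step (which would constitute a premature $j$-capture for $j<k$ and destroy them). The convex-combination estimate bounds their speed; strict betweenness plus positive segment length rules out premature contact with $\e$; and distinctness of the interpolation points inherits from distinctness of the preimages. The initializing move is designed precisely to supply the slack needed to \emph{establish} the invariant in one step, after which induction carries it forward until $\p_1$ closes the distance under Sgall's analysis.
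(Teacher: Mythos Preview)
Your proposal is correct and follows essentially the same approach as the paper: reduce to Sgall's single-pursuer theorem for the lead pursuer $\p_1$, with the remaining $k-1$ pursuers riding along the segment $[\p_1(t),\e(t)]$ so that all $k$ coincide with $\e$ at the terminal step. If anything, you are more careful than the paper, whose proof only notes that the escort positions exist by convexity and that $\p_1$ catches $\e$ in $O(D^2)$ steps; your convex-combination estimate for the unit-speed feasibility of the escort moves and your attention to the non-co-location constraints fill in details the paper leaves implicit.
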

\begin{proof}
  Since $\p_1$ uses the Sgall move throughout the pursuit,
  $\norm{\e-\p_1}$ becomes equal to zero in at most $O(D^2)$ number of
  time steps. Each pursuer move in step 4 of the algorithm exists
  since the environment is convex. Thus, the remaining $k-1$ pursuers
  ensure that the evader is $k$-captured when $\p_1$ becomes
  coincident with $\e$.
\end{proof}

\section{Closing Remarks}\label{sec:conclusions}

In this paper, we introduced a new variant of the classical
pursuit-evasion problem in an $m$-dimensional Euclidean space, which requires
multiple pursuers to simultaneously reach the evader for capture. We
showed that, for $k$-capture to occur, the evader must lie inside the
$k$-Hull, in a pleasing generalization of the convex hull rule for the
single pursuer capture. The main result of the paper was to show that
this simple necessary condition is also sufficient. The proof of this
sufficiency required a new pursuit strategy, requiring both an Advance
move, which is a modified version of a known Planes algorithm and a
new type of Cone move, which requires a careful coordination among the
pursuers. For a version of this problem played in a compact and convex
environment, we showed that $k$-capture is always possible.

\medskip

Our work suggests a number of intriguing problems for future research.
Interesting directions include improving the upper bound on the time
taken to capture the evader and addressing versions of this problem in
general environments, with obstacles.


\begin{thebibliography}{99}

\bibitem{JEL:86}
J.~E. Littlewood, \emph{Littlewood's Miscellany}.\hskip 1em plus 0.5em minus
  0.4em\relax Cambridge University Press, 1986.

\bibitem{RKG:91}
R.~K. Guy, ``Unsolved problems in combinatorial games,'' in \emph{Combinatorial
  Games}, ser. Proc. Symp. Appl. Math., R.~K. Guy, Ed., 1991, vol.~43, pp.
  183--189.

\bibitem{JS:01}
J.~Sgall, ``A solution of {D}avid {G}ale's lion and man problem,''
  \emph{Theoretical Computational Science}, vol. 259, no. (1-2), pp. 663--670,
  2001.

\bibitem{RI:65}
R.~Isaacs, \emph{Differential Games}.\hskip 1em plus 0.5em minus 0.4em\relax
  Wiley, 1965.

\bibitem{SK-CVR:05}
S.~Kopparty and C.~V. Ravishankar, ``A framework for pursuit evasion games in
  $\mathbb{R}^n$,'' \emph{Information Processing Letters}, vol.~96, no.~3, pp.
  114--122, 2005.

\bibitem{LA-ASG-EMR:92}
L.~Alonso, A.~S. Goldstein, and E.~M. Reingold, ``Lion and {M}an: {U}pper and
  lower bounds,'' \emph{ORSA Journal of Computing}, vol.~4, no.~4, pp.
  447--452, 1992.

\bibitem{VI-SK-SK:05}
V.~Isler, S.~Kannan, and S.~Khanna, ``Randomized pursuit-evasion in a polygonal
  environment,'' \emph{IEEE Transactions on Robotics}, vol.~5, no.~21, pp.
  875--884, 2005.

\bibitem{SBA-RB-RG:09}
S.~B. Alexander, R.~Bishop, and R.~Ghrist, ``Capture pursuit games on unbounded
  domains,'' \emph{Enseign. Math. (2)}, vol.~55, no. 1-2, pp. 103--125, 2009.

\bibitem{LJG-JCL-SML-DL-RM:98}
L.~J.~Guibas, J.~C.~Latombe, S.~M.~LaValle, D.~Lin, and R.~Motwani,
``A visibility-based pursuit-evasion problem,'' \emph{International
  Journal on Computational Geometry and Applications}, vol.~9, no.~4, pp. 471--494, 1998.

\bibitem{SDB-FB-JPH:07o}
S.~D. Bopardikar, F.~Bullo, and J.~P. Hespanha, ``On discrete-time
  pursuit-evasion games with sensing limitations,'' \emph{IEEE Transactions on
  Robotics}, vol.~24, no.~6, pp. 1429--1439, 2008.

\bibitem{NK-VI:08}
N.~Karnad and V.~Isler, ``Bearing-only pursuit,'' in \emph{{IEEE} Int. Conf. on
  Robotics and Automation}, Pasadena, CA, USA, May 2008, pp. 2665--2670.

\bibitem{TDP:78}
T.~D. Parsons, ``Pursuit-evasion in a graph,'' in \emph{Theory and Applications
  of Graphs}, ser. Lecture Notes in Mathematics, Y.~Alavi and D.~Lick,
  Eds.\hskip 1em plus 0.5em minus 0.4em\relax Springer, 1978, vol. 642, pp.
  426--441.

\bibitem{MA-MF:84}
M.~Aigner and M.~Fromme, ``A game of cops and robbers,'' \emph{Discrete Applied
  Mathematics}, vol.~8, pp. 1--12, 1984.

\bibitem{VI-NK:08}
V.~Isler, and N.~Karnad, ``The Role of Information in the Cop-Robber
Game,'' \emph{Theoretical Computer Science}, vol.~399, no.~3,
  pp.~179--190, 2008.

\bibitem{KK-SS:10}
K.~Klein and S.~Suri, ``Robot kabaddi,'' in \emph{22nd Canadian Conference on
  Computational Geometry (CCCG)}, Winnipeg, Canada, August 2010.

\bibitem{RC-MS-CKY:87}
R.~Cole, M.~Sharir, and C.~K. Yap, ``On k-hulls and related problems,''
  \emph{SIAM J. Comput.}, vol.~16, no.~1, pp. 61--77, 1987.

\bibitem{HE:87}
H.~Edelsbrunner, \emph{Algorithms in Combinatorial Geometry}.\hskip 1em plus
  0.5em minus 0.4em\relax Springer, 1987.

\bibitem{EH:23}
E.~Helly, ``\"{U}ber mengen konvexer k\"{o}rper mit gemeinschaftlichen
  punkten,'' \emph{Jahresbericht Deutsch. Math. Vereining.}, vol.~32, pp.
  175--176, 1923.

\end{thebibliography}
\end{document}